\definecolor{DarkGray}{rgb}{0.1,0.1,0.5}
\newcommand{\bra}[1]{{\langle#1|}}
\newcommand{\ket}[1]{{|#1\rangle}}
\newcommand{\norm}[1]{{\| #1 \|}}
\newcommand{\bignorm}[1]{{\big\| #1 \big\|}}
\newcommand{\Bignorm}[1]{{\Big\| #1 \Big\|}}
\def\C {{\bf C}}
\def\H {{\mathcal H}}
	\def\L {{\mathcal L}}
\def\S {{\mathcal S}}
\newcommand{\identity}{\ensuremath{\boldsymbol{1}}} %\mathbb{I}
\newcommand{\Id}{\identity} 
\DeclareMathOperator{\SWAP}{\operatorname{SWAP}}
\newcounter{sprows}
\newlength{\spheight}
\newlength{\spraise}
\newcommand{\comment}[1]{\emph{\color{blue}Comment:\color{black} #1}} % use for simply removing comments
\newlength{\commentslength}
\newcommand{\comments}[1]{
\hspace{-2\parindent}
\addtolength{\commentslength}{-\commentslength}
\addtolength{\commentslength}{\linewidth}
\addtolength{\commentslength}{-\parindent}
\fcolorbox{blue}{white}{\smallskip\begin{minipage}[c]{\commentslength}
\emph{Comments:}\begin{itemize}#1\end{itemize}\end{minipage}}\bigskip
}
\newcommand{\rem}[1]{}
\newtheorem{theorem}{Theorem}[section]
\newtheorem{lemma}[theorem]{Lemma}
\newtheorem{corollary}[theorem]{Corollary}
\newtheorem{claim}[theorem]{Claim}
\newfont{\subsubsecfnt}{ptmri8t at 11pt}
\renewcommand{\subparagraph}[1]{\smallskip{\subsubsecfnt #1.}}
\newcommand{\eqnref}[1]{\hyperref[#1]{{(\ref*{#1})}}}
\newcommand{\thmref}[1]{\hyperref[#1]{{Theorem~\ref*{#1}}}}
\newcommand{\lemref}[1]{\hyperref[#1]{{Lemma~\ref*{#1}}}}
\newcommand{\corref}[1]{\hyperref[#1]{{Corollary~\ref*{#1}}}}
\newcommand{\defref}[1]{\hyperref[#1]{{Definition~\ref*{#1}}}}
\newcommand{\secref}[1]{\hyperref[#1]{{Section~\ref*{#1}}}}
\newcommand{\figref}[1]{\hyperref[#1]{{Figure~\ref*{#1}}}}
\newcommand{\tabref}[1]{\hyperref[#1]{{Table~\ref*{#1}}}}
\newcommand{\remref}[1]{\hyperref[#1]{{Remark~\ref*{#1}}}}
\newcommand{\appref}[1]{\hyperref[#1]{{Appendix~\ref*{#1}}}}
\newcommand{\claimref}[1]{\hyperref[#1]{{Claim~\ref*{#1}}}}
\newcommand{\factref}[1]{\hyperref[#1]{{Fact~\ref*{#1}}}}
\newcommand{\propref}[1]{\hyperref[#1]{{Proposition~\ref*{#1}}}}
\newcommand{\exampleref}[1]{\hyperref[#1]{{Example~\ref*{#1}}}}
\newcommand{\conjref}[1]{\hyperref[#1]{{Conjecture~\ref*{#1}}}}
\def\COLOR{}
\newcommand{\Alice}[1] {{\color{red} {#1}}}
\newcommand{\Bob}[1] {{\color{blue} {#1}}}
\newcommand{\Alice}[1] {{\color{black} {#1}}}
\newcommand{\Bob}[1] {{\color{black} {#1}}}
\newcommand{\pst}{\psi_\theta}
\newcommand{\EPRstate}{{\mathrm{EPR}}}
\def \EPRstate {{\mathrm{EPR}}}
\def \omegaopt {{\omega_\mathrm{opt}}}
\renewcommand{\comment}[1]{}\renewcommand{\comments}[1]{}
\begin{document}
\def\compilefullpaper{}

\title{Efficient test for entanglement}
\title{Test for a large amount of entanglement, \newline using few measurements}
\author{Rui Chao}
\affiliation{University of Southern California}
\author{Ben W. Reichardt}
\affiliation{University of Southern California}
\author{Chris Sutherland}
\affiliation{University of Southern California}
\author{Thomas Vidick}
\affiliation{California Institute of Technology}
\date{}

\maketitle

\begin{abstract}
Bell-inequality violations establish that two systems share some quantum entanglement.  We give a simple test to certify that two systems share an asymptotically large amount of entanglement, $n$ EPR states.  The test is efficient: unlike earlier tests that play many games, in sequence or in parallel, our test requires only one or two CHSH games.  One system is directed to play a CHSH game on a random specified qubit~$i$, and the other is told to play games on qubits $\{i, j\}$, without knowing which index is~$i$.  

The test is robust: a success probability within $\delta$ of optimal guarantees distance $O(n^{5/2}\sqrt{\delta})$ from~$n$ EPR states.  However, the test does not tolerate constant~$\delta$; it breaks down for $\delta = \tilde \Omega(1/\sqrt n)$.  We give an adversarial strategy that succeeds within~$\delta$ of the optimum probability using only $\tilde O(\delta^{-2})$ EPR states.  
\end{abstract}

\section{Introduction}

Entanglement separates quantum from classical physics, and is a key source for the power of quantum-mechanical devices.  A natural experimental challenge, to advance our quantum engineering skills and possibly test quantum physics itself, is to generate highly entangled states.  However, for studying the foundations of physics and for cryptographic applications, it is important to take a conservative, adversarial perspective.  A convincing test for entanglement must avoid modeling assumptions---such as that experimental operation X, perhaps shining a laser, implements a unitary $X$ on a particular qubit---or even the assumption that the system is quantum-mechanical at all.  A system may appear to be entangled under simple tests, and yet its behavior might still have an underlying classical explanation.  

Recently, ``loophole-free" Bell-inequality violations have been demonstrated~\cite{Hensen15loopholefreeCHSH, Hensen16loopholetwo, Shalm15loopholeCHSH, Giustina15loopholeCHSH}.  These experiments establish that there exists some entanglement between two experimental systems, ruling out classical local-hidden-variable models, in a very adversarial setting.  Following the rules of the CHSH game~\cite{ClauserHorneShimonyHolt69chshgame}, they ask random questions to the two systems---electrons separated by over a kilometer in~\cite{Hensen15loopholefreeCHSH}---and security is based on the space-like separation of the measurements giving the answers.  This separation prevents any collusion, even with signals moving at the speed of light.  The results are up to high statistical confidence, because unentangled systems could get lucky and pass the tests by chance.  

\medskip

Beyond showing some entanglement, how can one certify that two  systems are \emph{highly} entangled?  

We address this natural question.  We develop a test for $n$ EPR states worth of entanglement, $\tfrac{1}{\sqrt{2^n}}(\ket{00} + \ket{11})^{\otimes n}$.  We will explain this test below, but first let us give some more context.  

A Bell-inequality violation certifies that there is some entanglement; a nearly optimal {violation} can certify that the entanglement is of the specific form of an EPR state~\cite{MagniezMayersMoscaOllivier05selftest, McKagueYangScarani12chshrigidity, ReichardtUngerVazirani13qmip}.  Wu et al.~\cite{WuBancalMcKagueScarani15twoEPRtest} have shown that two CHSH games, played in parallel, can be used to test for $n = 2$ EPR states.  One might reasonably suppose that if playing one or two CHSH games can show some entanglement, then playing many CHSH games should suffice to show lots of entanglement.  It is not so simple.  Different games need not be independent from each other, and complicated dependencies could conceivably allow low-dimensional systems to act like higher-dimensional ones~\cite{ChaoReichardtSutherlandVidick16overlapping}.  

The first test for asymptotically many EPR states was given in~\cite{ReichardtUngerVazirani13qmip}.  The motivation for this test was to develop a scheme for delegated quantum computation, allowing one to {securely} run a quantum circuit across several untrusted devices. The test tolerates polynomially small deviations from the ideal behavior, and in principle can be run with polynomial overhead, but it is severely impractical.  The ideal systems need to share much more entanglement, {$N = n^{O(1)} \gg n$} EPR states, than is being certified.  The test is based on playing $N$ CHSH games \emph{sequentially}.  This is highly constraining: the answer to one game must be given before the question for the next game, and yet the whole sequence of questions and answers must be space-like separated from one system to the other.  Finally, the test tolerates only an inverse-polynomial error rate in the systems.  

McKague~\cite{McKague15parallelselftesting} gave a much-improved test for $n$ EPR states.  His test uses only~$n$ EPR states; none are lost in the analysis.  They are all measured, and the measurements must still be space-like separated from one system to the other, but they can be performed in parallel.  The final distance from $\tfrac{1}{\sqrt{2^n}}(\ket{00} + \ket{11})^{\otimes n}$, in Euclidean norm, is $O(\sqrt{n} \, \delta^{1/8})$, where $\delta$ is an upper bound on the error for each of $\tilde O(n)$ test settings.  Thus to achieve error $\epsilon$, one should set $\delta \sim \epsilon^8 / n^4$, so each EPR state and one-qubit measurement should have error $\delta / n \sim \epsilon^8 / n^5$.  Then $\tilde O(n \cdot 1/\delta^2) = \tilde O(n^9 / \epsilon^{16})$ experiments suffice to establish, say, $99\%$ statistical confidence.  

\medskip

Our test is simpler.  Let $i$ and $j$ be uniformly random distinct indices between $1$ and~$n$.  \mbox{Conceptually}, the key component of the test amounts to giving the first system $\{i, j\}$ and the second system~$i$, or vice versa; ask each system to play CHSH games on the specified qubits.  (See \secref{s:protocol} for a complete description of the protocol, which involves a couple other sub-tests.)  Thus most of the EPR states are not destructively measured, so they can be used later.  

We show that passing our test with probability within $\delta$ of the optimal value implies that the systems are $O(n^{5/2} \sqrt \delta)$ close to $n$ EPR states.  Therefore to achieve a final error of~$\epsilon$, the error on individual EPR states and one-qubit measurements can be $\delta \sim \epsilon^2 / n^5$.  To achieve a given statistical confidence, $O(1/\delta^2) = O(n^{10} / \epsilon^4)$ experiments suffice.\footnote{Avoiding a Markov inequality in the proof can improve this to $O(n^9 / \epsilon^4)$.}  This is obviously quite inefficient; the main novelty and advantage of our protocol is its simple form.  

\medskip

The intuition behind our test is as follows.  Call the two systems Alice and Bob.  If they truly share $n$ EPR states, then it is easy for them to pass the test.  In general, Alice and Bob can do anything at all.  For example, they might try to devise some scheme that uses only $n - 1$ EPR states.  The reason that $n$ EPR states are required is that when Bob is asked $\{i, j\}$ he has to select two qubits, one of which will be tested for entanglement with Alice---and he does not know which one.  Monogamy of entanglement prevents both qubits from simultaneously forming EPR states with Alice's qubit.  Therefore (skipping over some arguments), Alice and Bob must share two EPR states.  Chaining together pairwise statements like this establishes that they share $n$ EPR states.  

Note that this test requires very little processing to test a high-dimensional quantum state, as the number of possible operations to be performed scales quadratically with the number of qubits tested.  This contrasts with the exponential number of possibilities required for parallel self-tests.  Possibly a further value is that the test is largely nondestructive; only two of the~$n$ EPR states are measured in the test, leaving $n - 2$ verified EPR states for other uses.  

The practical usefulness of the test, however, depends on its tolerance to imperfections. We characterize a test as ``robust'' if the guaranteed distance to the target state scales polynomially with the sub-optimality incurred by, say, instrumental imperfection. Precisely, we say that a test has ``robustness $f(n,\delta)$'' if a success probability that is $\delta$-close to optimal guarantees a distance $f(n,\delta)$ to the target state. For example, the test from~\cite{ReichardtUngerVazirani13qmip} is robust, and our test is robust as well: the dependence on errors grows only polynomially in~$n$, not exponentially. This is not to say that the test can readily be implemented experimentally.  Might a stronger analysis show, for example, that if Alice and Bob pass the test with $99\%$ probability, then they must share a state with $99\%$ fidelity to $\tfrac{1}{\sqrt{2^n}}(\ket{00} + \ket{11})^{\otimes n}$?  No.  We give an explicit construction by which Alice and Bob can use logarithmically many EPR states, $O\big(\tfrac{1}{\delta^2} \log n\big)$, to pass the test with probability within~$\delta$ of the optimal value.  This implies that the analysis breaks down for $\delta = \Omega(\sqrt{(\log n)/n})$.  The test does not tolerate constant error rates.  (A similar limitation will likely hold for other tests that consider only two qubits at a time~\cite{ChaoReichardtSutherlandVidick16overlapping}.)  

\smallskip

Our strategy for exhibiting $n$ qubits is based on a technique first used in~\cite{KempeKobayashiMatsumotoTonerVidick07qmip}.  We introduce two main ideas.  The first consists in combining $2n (n-1)$ pairwise approximate commutation relations $[P_i,Q_j] \ket \psi \approx 0$, for $P, Q \in \{X, Z\}$ and $i \neq j$, with the exact commutation that always exists between an operator on Alice's Hilbert space~$\H_\Alice{A}$ and an operator on Bob's space~$\H_\Bob{B}$.  We crucially use the fact that, as a consequence of the CHSH test, an operator $P_i$~or~$Q_j$ can be ``pulled'' from $\H_\Alice{A}$ to $\H_\Bob{B}$, at which point it commutes with \emph{any} operator on~$\H_\Alice{A}$.  This operation allows us to control the error blow-up that would become unmanageable if we only had access to a single space~$\H$.  The second idea is to enforce the desired approximate commutation $[P_i,Q_j] \ket \psi \approx 0$ through the use of an intermediate set of operators, $P_i^{\{i,j\}}$~and~$Q_j^{\{i,j\}}$, that are obtained by adding an additional test in which the CHSH test is executed on a pair $\{i,j\}$ of (purported) qubits.  The use of a ``dummy question'' was first introduced in~\cite{ItoKobayashiMatsumoto09oracularization} for the purposes of inducing approximate commutation, and we show that it can be effectively leveraged in our context as well.  

Although our main result is based on the CHSH test and the use of EPR states, we show in \appref{s:psiselftest} that the result can be generalized to separate qubits via any two-qubit state entangled across $\H_\Alice{A} \otimes \H_\Bob{B}$.  It is an open question whether similar results can be attained based on higher-dimensional partially entangled states.  

\smallskip

Recently, several other entanglement tests have been given.  Coudron and Natarajan~\cite{CoudronNatarajan16rigidmagicsquare} and Coladangelo~\cite{Coladangelo16parallelchsh} study parallel repetition of the Magic Square game~\cite{Mermin90magicsquare, Peres90magicsquare}, achieving robustness of $O(n^2 \sqrt \delta)$ and $O(n^{3/2} \sqrt \delta)$, respectively.  The optimal success probability for this game is~$1$, which allows for a more efficient analysis in comparison to the CHSH game. They also require fewer experiments to achieve certain statistical confidence.  Natarajan and Vidick~\cite{NatarajanVidick17lineartest} leverage a quantum version of the linearity test by Blum et al.~\cite{BLRtest}, and {obtain} robustness that is independent of $n$. Their test is rather complex and requires the verifier to choose among exponentially many possible questions. This was recently improved~\cite{NatarajanVidick2018lowdegree} to a test that achieves simultaneously poly$(n)$ questions and constant robustness ($O(\sqrt \delta)$).  Subsequently to our work, Ostrev and Vidick~\cite{OstrevVidick16entanglement} apply our \thmref{t:EPRstabilizersfrommath} to analyze a simpler XOR game than ours, resulting in similar number of possible questions, but slightly weaker robustness guarantees. It is an open question to determine the best trade-offs in terms of robustness, number of questions, and number of EPR pairs tested.  Robust protocols certifying a large amount of entanglement, without explicitly certifying that the state must be (close to) maximally entangled, are provided in~\cite{ColadangeloStark17lineargame,RotemYuen17highdim,RotemBancal17didistill}.

\subsection{Notation}

Denote the EPR state by $\ket{\EPRstate} = \frac{1}{\sqrt 2}(\ket{00} + \ket{11})$.  Let $[n] = \{1, \ldots, n\}$.  The Pauli matrices are $I = \big(\begin{smallmatrix}1&0\\0&1\end{smallmatrix}\big)$, $\sigma^x = \big(\begin{smallmatrix}0&1\\1&0\end{smallmatrix}\big)$, $\sigma^y = \big(\begin{smallmatrix}0&-i\\i&0\end{smallmatrix}\big)$ and $\sigma^z = \big(\begin{smallmatrix}1&0\\0&-1\end{smallmatrix}\big)$.  On higher-dimensional spaces, we will denote the identity operator by~$\identity$.  

We consider Hilbert spaces $\H_\Alice{A}, \H_\Alice{A'}, \H_\Bob{B}, \H_\Bob{B'}$, etc., which for clarity are labeled by the register $\Alice{A}, \Alice{A'}, \Bob{B}$ or $\Bob{B'}$ that contains the quantum state whose state space they represent.  Each such space is assumed to be finite dimensional.  We will use the letter $D$ as a variable which ranges over a subset of $\{\Alice{A}, \Alice{A'}, \Bob{B}, \Bob{B'}, \ldots\}$ that will be clear from context.

\section{State-dependent separation of $n$ EPR states} \label{s:EPRselftest}

In this section we prove a separation theorem for qubits that satisfy certain state-dependent \mbox{commutation} relations. In \secref{s:protocol} we provide a simple protocol that can establish the state-dependent assumptions in an experimental setting. 

\begin{theorem} \label{t:EPRstabilizersfrommath}
Let $\ket \psi \in \H_\Alice{A} \otimes \H_\Bob{B}$, with $\norm{\ket \psi} = 1$.  Assume that for $j \in [n]$ we are given reflections 
\begin{equation*}
(X_j)_\Alice{A}, (Z_j)_\Alice{A} \in \L(\H_\Alice{A})
\quad \text{and} \quad
(X_j)_\Bob{B}, (Z_j)_\Bob{B} \in \L(\H_\Bob{B})
\end{equation*}
that for $D$ either $\Alice{A}$ or~$\Bob{B}$, all $i \neq j$, and $P, Q$ either $X$ or~$Z$, satisfy $\{(X_j)_D, (Z_j)_D\} = 0$ and 
\begin{align*}
\bignorm{ [P_i, Q_j]_D \ket \psi } &\leq \epsilon \\
\bignorm{ (P_j)_\Alice{A} \otimes (P_j)_\Bob{B} \ket \psi - \ket \psi } &\leq \epsilon
 \enspace .
\end{align*}

Let 
\begin{equation*}
\ket{\psi'} = \ket \psi \otimes \ket{\EPRstate}^{\otimes n}_\Alice{A'} \otimes \ket{\EPRstate}^{\otimes n}_\Bob{B'} \in \H_\Alice{A} \otimes (\C^2)^{\otimes 2n}_\Alice{A'} \otimes \H_\Bob{B} \otimes (\C^2)^{\otimes 2n}_\Bob{B'}
 \enspace .
\end{equation*}
Then for $D \in \{\Alice{A}, \Bob{B}\}$, there exist reflections $X_1', Z_1', \ldots, X_n', Z_n'$ on $\H_D \otimes (\C^2)^{\otimes 2n}_{D'}$, with $\{X_j', Z_j'\} = 0$, $[P_i', Q_j'] = 0$ for $i \neq j$, and $\bignorm{ \big( P_j' - P_j \otimes \identity_{D'} \big) \ket{\psi'} } = O(n \epsilon)$, and furthermore satisfying\footnote{For notational sanity, we suppress some subscripts $D$ or $D'$.  Of course operators $P_j'$ acting on $\Alice{AA'}$ and $\Bob{BB'}$ are in general different.}  
\begin{equation} \label{e:eprccl}
\bignorm{ (P_j')_\Alice{AA'} \otimes (P_j')_\Bob{BB'} \ket{\psi'} - \ket{\psi'} } = O(n \epsilon)
 \enspace .
\end{equation}
\end{theorem}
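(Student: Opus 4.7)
The plan is to obtain the new reflections $(P_j')_D$ on each side $D \in \{\Alice{A}, \Bob{B}\}$ by conjugating exact Paulis (acting on fresh ancilla qubits) by an ``extraction'' unitary $V^{(D)}$ on $\H_D \otimes (\C^2)^{\otimes 2n}_{D'}$, so that the algebraic relations $\{X_j',Z_j'\}=0$ and $[P_i',Q_j']=0$ (for $i\neq j$) become automatic from the Pauli algebra on disjoint ancilla qubits together with unitarity, leaving only a closeness estimate to prove. Concretely, I would set
\begin{equation*}
(P_j')_D := (V^{(D)})^\dagger \bigl(\identity_{\H_D}\otimes \sigma^P_{a_j}\bigr) V^{(D)},
\end{equation*}
where $a_j$ is a designated qubit drawn from the $j$-th fresh EPR pair in $D'$, and build $V^{(D)} = V^{(D)}_n \cdots V^{(D)}_1$ as a product of swap gadgets. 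Each factor $V^{(D)}_j$ is assembled from $((X_j)_D, (Z_j)_D)$ (using only the exact anti-commutation) together with Cliffords on the two qubits of the $j$-th ancilla EPR pair, and is designed so that, with the initial EPR state in $D'$ as input, it exchanges the ``logical qubit'' defined by $((X_j)_D,(Z_j)_D)$ with $a_j$; the partner qubit of each EPR pair provides the second degree of freedom that allows this exchange to be implemented by a unitary rather than merely an isometry.

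\textbf{Closeness and stabilizer.} By unitarity of $V^{(D)}$, $\bignorm{(P_j' - P_j\otimes\identity)\ket{\psi'}} = \bignorm{(\identity\otimes \sigma^P_{a_j}) V^{(D)}\ket{\psi'} - V^{(D)} (P_j\otimes\identity)\ket{\psi'}}$. I would telescope through the factors of $V^{(D)}$: the factor $V^{(D)}_j$ is built precisely so that $V^{(D)}_j (P_j \otimes \identity)$ agrees with $(\identity \otimes \sigma^P_{a_j}) V^{(D)}_j$ exactly on the initial EPR ancilla state, while for each $V^{(D)}_k$ with $k \neq j$, commuting $P_j \otimes \identity$ past $V^{(D)}_k$ introduces an error controlled by the hypothesis $\bignorm{[(P_j)_D, (Q_k)_D]\ket\psi} \leq \epsilon$. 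Summing the $O(n)$ contributions yields the claimed bound $O(n\epsilon)$. The stabilizer conclusion \eqnref{e:eprccl} then follows by the triangle inequality, combining the just-established closeness on each side with the assumed stabilizer $\bignorm{(P_j)_\Alice{A}\otimes(P_j)_\Bob{B}\ket\psi - \ket\psi} \leq \epsilon$ on the original operators.

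\textbf{Main obstacle.} The crux is keeping the error in the telescoping linear in $n$ rather than multiplicative. Naively, after commuting $P_j$ past several factors $V^{(D)}_{k_1}, \ldots, V^{(D)}_{k_m}$, the intermediate state has drifted from $\ket\psi$, so the next commutator $[(P_j)_D, (Q_{k_{m+1}})_D]$ applied to that drifted state could contribute an error much larger than $\epsilon$, producing at least an $n^2$ blow-up (and in principle an exponential one). The ``pulling'' technique highlighted in the introduction resolves this: before commuting $(P_j)_\Alice{A}$ past $V^{(\Alice{A})}_k$, I would first use the approximate stabilizer $(P_j)_\Alice{A}\otimes(P_j)_\Bob{B}\ket\psi \approx \ket\psi$ to replace $(P_j)_\Alice{A}$ by $(P_j)_\Bob{B}$; as an operator on $\H_\Bob{B}$, the latter \emph{exactly} commutes with $V^{(\Alice{A})}_k$ (which acts on $\H_\Alice{A} \otimes (\C^2)^{\otimes 2n}_\Alice{A'}$), so the commutation is free, and we then pull back to $(P_j)_\Alice{A}$. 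Each pull costs only $O(\epsilon)$, measured always against the fixed reference state $\ket\psi$, so the total error remains $O(n\epsilon)$. The technical heart of the argument is making this ``pulling'' precise in the setting of intermediate states that have already been acted upon by several extraction factors on one or both sides, so that the original hypotheses can still be invoked against them.
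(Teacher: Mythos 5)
Your construction coincides with the paper's: your gadget $V^{(D)}_j$ is (up to inessential ancilla Cliffords) the swap operator $(\S_j)_{DD'} = \frac12\big(\identity\otimes\identity + X_j\otimes\sigma^x_{2j-1} + Z_j\otimes\sigma^z_{2j-1} + i(X_jZ_j)\otimes\sigma^y_{2j-1}\big)$, and conjugating $\sigma^P_{a_j}$ by the full product $V^{(D)} = V_n\cdots V_1$ collapses, by exact disjointness of the factors that do not touch qubit $a_j$, to the paper's $P_j' = (\S_1\cdots\S_{j-1})(P_j\otimes\identity)(\S_{j-1}\cdots\S_1)$; the exact relations $\{X_j',Z_j'\}=0$ and $[P_i',Q_j']=0$ come out the same way. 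You also correctly isolate the crux: every $O(\epsilon)$ estimate must be invoked against a fixed reference state, or the telescoping blows up. But the pulling you propose---replacing $(P_j)_\Alice{A}$ by $(P_j)_\Bob{B}$ via the approximate stabilizer just before commuting past $V^{(\Alice{A})}_k$---stalls at precisely the point you flag as the ``technical heart'': at that stage the state is $V_{k-1}\cdots V_1\ket{\psi'}$, not $\ket{\psi'}$, and the hypothesis $\bignorm{(P_j)_\Alice{A}\otimes(P_j)_\Bob{B}\ket\psi - \ket\psi}\le\epsilon$ gives no control on $\bignorm{\big((P_j)_\Alice{A} - (P_j)_\Bob{B}\big)V_{k-1}\cdots V_1\ket{\psi'}}$; trying to commute the Alice-side factors $V_i$ out of the way to reach $\ket{\psi'}$ is circular, since that commutation is exactly what is being proved. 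Pulling $P_j$ also would not help even if it were justified: the obstructions are themselves Alice-side operators, and once they are dealt with, $(P_j)_\Alice{A}$ already commutes exactly with everything on Bob's side, so the Pauli never needs to move.

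What must be pulled across the bipartition is the swap gadgets themselves, and this is where the partner halves of the ancilla EPR pairs earn their keep (you assign them the wrong role: unitarity of $\S_j$ needs only one ancilla qubit, since the anti-commuting reflections $X_j, Z_j$ already define a full logical qubit on $\H_D$). The paper introduces $(\S_i')_{\Bob{B}\Alice{A'}}$, swapping Bob's $i$th logical qubit with the \emph{other} half of Alice's $i$th ancilla EPR pair, and proves $\bignorm{(\S_i)_\Alice{AA'}\ket{\psi'} - (\S_i')_{\Bob{B}\Alice{A'}}\ket{\psi'}} = O(\epsilon)$ (\lemref{t:pullswapfromAlicetoBob}): by the stabilizer hypotheses and \claimref{t:EPRstatestabilized}, the joint state of Alice's and Bob's $i$th qubits is $O(\epsilon)$-close to an exact EPR state, and on the nearby state with two exact EPR pairs the two swaps act identically. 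Since each pulled $\S_i'$ acts on $\H_\Bob{B}$ and ancilla qubit $2i$, it commutes \emph{exactly} with every operator remaining on Alice's side; consequently each step of the telescoping---a pull, or one application of $\bignorm{[\S_i, P_j\otimes\identity]_\Alice{AA'}\ket{\psi'}} = O(\epsilon)$ from the commutation hypothesis (\lemref{t:pauliswapcommuteondifferentqubits})---is applied directly to $\ket{\psi'}$ modulo exactly-commuting unitary factors, and the $O(n)$ steps of cost $O(\epsilon)$ each sum to $O(n\epsilon)$; the stabilizer conclusion then follows by the triangle inequality as you state. So: right construction, right diagnosis of the obstacle, but the decisive mechanism---pull the swaps via EPR uniqueness, not the Paulis via the stabilizer---is missing, and it is not a routine detail.
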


In words, the assumption of the theorem is that Alice and Bob have $n$ overlapping qubits, with Pauli operators on different qubits nearly commuting on $\ket \psi$, and such that $\ket \psi$ is nearly stabilized by $(X_j \otimes X_j)_{\Alice{A} \Bob{B}}$ and $(Z_j \otimes Z_j)_{\Alice{A} \Bob{B}}$ for each~$j$.  Note that these assumptions are satisfied by $\ket \psi = \ket{\EPRstate}^{\otimes n}_{\Alice{A}\Bob{B}}$ and $X_j = \sigma_j^x$, $Z_j = \sigma_j^z$, with $\epsilon = 0$.  

The conclusion of the theorem is that, up to local isometries (that introduce the ancillary states $\ket{\EPRstate}^{\otimes n}_\Alice{A'}$ and $\ket{\EPRstate}^{\otimes n}_\Bob{B'}$), there exist $n$ non-overlapping qubits for each of Alice and Bob, with associated Pauli operators $(P_j')_{DD'}$, that on the extended state $\ket{\psi'}$ are nearly the same as the original qubits' Pauli operators, and such that $\ket{\psi'}$ is nearly stabilized by $(X_j' \otimes X_j')_{\Alice{AA'} \Bob{BB'}}$ and $(Z_j' \otimes Z_j')_{\Alice{AA'} \Bob{BB'}}$.  By~\cite[Theorem~2.3]{ChaoReichardtSutherlandVidick16overlapping}, there are local changes of basis under which these new qubits are in tensor product.  Since $\ket{\EPRstate}$ is the unique pure state stabilized by both $\sigma^x \otimes \sigma^x$ and $\sigma^z \otimes \sigma^z$, it follows that $\ket{\psi'}$ is close to $n$ EPR states between Alice and Bob, plus an extra ancillary state, with $X_j', Z_j'$ the standard Pauli operators on one half of the $j$th EPR state.  We state this as a corollary to \thmref{t:EPRstabilizersfrommath}: 

\def\ancilla {{\mathrm{extra}}}

\begin{corollary} \label{t:EPRpairsfromstabilizers}
Under the assumptions of \thmref{t:EPRstabilizersfrommath}, there are unitaries $U_D : \H_D \otimes (\C^2)^{\otimes 2n}_{D'} \rightarrow (\C^2)^{\otimes n}_D \otimes \hat \H_D$, for $D\in\{\Alice{A},\Bob{B}\}$, and a state $\ket{\ancilla} \in \hat \H_\Alice{A} \otimes \hat \H_\Bob{B}$ such that 
\begin{align*}
\bignorm{ U_\Alice{A} \otimes U_\Bob{B} \ket{\psi'} - \ket{\EPRstate}_{\Alice{A}\Bob{B}}^{\otimes n} \otimes \ket{\ancilla} } &= O(n^{3/2} \epsilon) \\
\norm{ (U X_j U^\dagger - \sigma^x_j \otimes \identity_{\hat \H})_D \ket{\EPRstate}_{\Alice{A}\Bob{B}}^{\otimes n} \otimes \ket{\ancilla} } &= O(n^{3/2} \epsilon) \\
\norm{ (U Z_j U^\dagger - \sigma^z_j \otimes \identity_{\hat \H})_D \ket{\EPRstate}_{\Alice{A}\Bob{B}}^{\otimes n} \otimes \ket{\ancilla} } &= O(n^{3/2} \epsilon) 
 \enspace .
\end{align*}
\end{corollary}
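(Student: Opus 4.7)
\emph{Proof plan.} The plan is to combine \thmref{t:EPRstabilizersfrommath} with the structure theorem for exactly anticommuting/commuting reflections from~\cite{ChaoReichardtSutherlandVidick16overlapping}, and then exploit the uniqueness of $\ket{\EPRstate}$ as the joint $+1$-eigenstate of $\sigma^x \otimes \sigma^x$ and $\sigma^z \otimes \sigma^z$.

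First I would invoke \thmref{t:EPRstabilizersfrommath} to replace the overlapping reflections $(P_j)_D$ with new reflections $X_j', Z_j'$ on $\H_D \otimes (\C^2)^{\otimes 2n}_{D'}$ that anticommute exactly within each index ($\{X_j', Z_j'\} = 0$), commute exactly across different indices ($[P_i', Q_j'] = 0$ for $i \neq j$), approximate the originals in the sense $\bignorm{(P_j' - P_j \otimes \identity_{D'})\ket{\psi'}} = O(n\eps)$, and satisfy~\eqnref{e:eprccl} up to the same error. Exact anticommutation within each index together with exact commutation across indices is precisely the hypothesis of~\cite[Theorem~2.3]{ChaoReichardtSutherlandVidick16overlapping}, which then supplies unitaries $U_D : \H_D \otimes (\C^2)^{\otimes 2n}_{D'} \to (\C^2)^{\otimes n}_D \otimes \hat \H_D$ such that $U_D X_j' U_D^\dagger = \sigma^x_j \otimes \identity_{\hat\H_D}$ and $U_D Z_j' U_D^\dagger = \sigma^z_j \otimes \identity_{\hat\H_D}$.

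Let $\ket{\Phi} = (U_\Alice{A} \otimes U_\Bob{B}) \ket{\psi'}$. By~\eqnref{e:eprccl} and unitary invariance, the $2n$ mutually commuting stabilizers $S_{j,P} = (\sigma^P_j)_\Alice{A} \otimes (\sigma^P_j)_\Bob{B} \otimes \identity_{\hat\H}$, for $P \in \{x,z\}$ and $j \in [n]$, each satisfy $\bignorm{S_{j,P} \ket{\Phi} - \ket{\Phi}} = O(n\eps)$. The crucial step is to project $\ket \Phi$ onto the joint $+1$-eigenspace $V$ of all $S_{j,P}$ simultaneously: expanding $\ket\Phi$ in a joint eigenbasis $\{\ket s\}$ of the $2n$ commuting stabilizers, one has $\bignorm{(\identity - S_{j,P})\ket\Phi}^2 = 4 \sum_{s:\, s_{j,P} = -1} |\langle s|\Phi\rangle|^2$, so summing over $(j,P)$ and using that every $\ket s \notin V$ is counted at least once yields
\begin{equation*}
\bignorm{\ket\Phi - \Pi_V \ket\Phi}^2 \;\le\; \tfrac{1}{4} \sum_{j,P} \bignorm{S_{j,P}\ket\Phi - \ket\Phi}^2 \;=\; O(n \cdot n^2 \eps^2)\,,
\end{equation*}
hence $\bignorm{\ket\Phi - \Pi_V \ket\Phi} = O(n^{3/2}\eps)$. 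Since $V = \Span(\ket{\EPRstate}^{\otimes n}_{\Alice{A}\Bob{B}}) \otimes \hat\H_\Alice{A} \otimes \hat\H_\Bob{B}$ by uniqueness of the Bell state as joint stabilizer $+1$-eigenvector, the normalized projection has the form $\ket{\EPRstate}^{\otimes n} \otimes \ket{\ancilla}$ for a suitable $\ket{\ancilla} \in \hat\H_\Alice{A} \otimes \hat\H_\Bob{B}$, giving the first displayed inequality.

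Finally, the two Pauli-consistency estimates follow by chaining the bound $\bignorm{(X_j' - X_j \otimes \identity_{D'})\ket{\psi'}} = O(n\eps)$ from \thmref{t:EPRstabilizersfrommath} with the state-closeness just established. Conjugating by the unitaries preserves norms, giving $\bignorm{(U_D X_j U_D^\dagger - \sigma^x_j \otimes \identity_{\hat\H})\ket\Phi} = O(n\eps)$, and substituting $\ket{\EPRstate}^{\otimes n} \otimes \ket{\ancilla}$ for $\ket\Phi$ costs an additional $O(n^{3/2}\eps)$ because the operator in question has norm at most~$2$. The analogous argument handles $Z_j$. The main obstacle, and the only place the $n^{3/2}$ exponent enters, is the projection step: a naive application of triangle inequality across the $2n$ stabilizers would produce $O(n^2\eps)$; the improvement to $O(n^{3/2}\eps)$ requires the $L^2$ orthogonality argument above, exploiting that the joint eigenbasis of the commuting stabilizer group lets each ``bad'' basis vector contribute to at least one summand.
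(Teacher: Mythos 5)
Your proposal is correct, and it follows the paper's high-level skeleton --- invoke \thmref{t:EPRstabilizersfrommath}, then \cite[Theorem~2.3]{ChaoReichardtSutherlandVidick16overlapping} to bring the exactly anticommuting/commuting $X_j', Z_j'$ to standard Pauli form, then exploit that $\ket{\EPRstate}$ is the unique joint $+1$-eigenvector of $\sigma^x \otimes \sigma^x$ and $\sigma^z \otimes \sigma^z$ --- but you package the quantitative heart differently. The paper proceeds in two steps: \claimref{t:EPRstatestabilized} converts the two stabilizer conditions for each fixed $j$ into $O(n\eps)$-closeness of the state to an EPR state on the $j$th pair tensored with an arbitrary remainder, and \claimref{t:manyEPRstatesstabilized} then combines these $n$ per-pair statements at the cost of a factor $\sqrt{2n}$, via a coefficient-mass computation in a product basis. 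Your single projection onto the joint $+1$-eigenspace $V$ of all $2n$ commuting stabilizers, using the Parseval identity $\norm{(\identity - S_{j,P})\ket\Phi}^2 = 4\sum_{s :\, s_{j,P} = -1} |\braket{s}{\Phi}|^2$ in a joint eigenbasis together with the observation that each eigenvector outside $V$ is counted at least once, merges the paper's two claims into one step; it is the same orthogonality mechanism ($L^2$ accumulation in place of triangle-inequality chaining, which is exactly where both proofs earn $n^{3/2}\eps$ rather than $n^2\eps$), and it yields the identical bound $\sqrt{n/2} \cdot O(n\eps)$. Two minor points. First, you should record the routine normalization step: from $\norm{\ket\Phi - \Pi_V \ket\Phi} \leq \delta$ one gets $\norm{\Pi_V \ket\Phi} = \sqrt{1 - \delta^2}$ and hence distance at most $\sqrt{2}\,\delta$ from $\ket\Phi$ to the normalized vector $\ket{\EPRstate}^{\otimes n} \otimes \ket{\ancilla}$, with the trivial fallback (any choice of $\ket{\ancilla}$) when $n^{3/2}\eps = \Omega(1)$ --- the paper handles the analogous degenerate case in the last lines of \claimref{t:manyEPRstatesstabilized}. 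Second, on the plus side, your final chaining argument for the two Pauli-consistency estimates ($O(n\eps)$ from $\norm{(P_j' - P_j \otimes \identity_{D'})\ket{\psi'}}$, conjugation by $U_D$, plus $O(n^{3/2}\eps)$ for substituting the target state under an operator of norm at most~$2$) makes explicit a step the paper compresses into ``\corref{t:EPRpairsfromstabilizers} follows.''
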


\begin{proof}
Theorem~2.3 in~\cite{ChaoReichardtSutherlandVidick16overlapping} gives an isomorphism from $\H'_D$ to $(\C^2)^{\otimes n}_D \otimes \hat \H_D$ under which the $(X_j')_D$ and $(Z_j')_D$ promised by \thmref{t:EPRstabilizersfrommath} are simply $\sigma^x_j \otimes \identity_{\hat \H_D}$ and $\sigma^z_j \otimes \identity_{\hat \H_D}$, respectively.  The first claim below then shows that Eq.~\eqnref{e:eprccl} implies that under this isometry $\ket{\psi'}$ must be close to an EPR state on the $j$th qubits, for every $j \in [n]$.  

\begin{claim} \label{t:EPRstatestabilized}
There exists a universal constant~$c$ such that: if $\ket \phi \in \C^2 \otimes \C^2 \otimes \H$ is a state with $\norm{\sigma^x_1 \sigma^x_2 \ket \phi - \ket \phi}, \norm{\sigma^z_1 \sigma^z_2 \ket \phi - \ket \phi} \leq \delta$, then for some state $\ket{\phi'} \in \H$, 
\begin{equation*}
\bignorm{ \ket \phi - \ket{\EPRstate} \otimes \ket{\phi'} } \leq c \, \delta
 \enspace .
\end{equation*}
\end{claim}

\begin{proof}
Expand $\ket \phi = \sum_{a, b \in \{0,1\}} \ket{a,b} \ket{\phi_{a,b}}$.  Then $\norm{\sigma^z_1 \sigma^z_2 \ket \phi - \ket \phi}^2 = 4 (\norm{\ket{\phi_{01}}}^2 + \norm{\ket{\phi_{10}}}^2)$ and $\norm{\sigma^x_1 \sigma^x_2 \ket \phi - \ket \phi}^2 = \sum_{a,b} \norm{\ket{\phi_{ab}} - \ket{\phi_{\bar a \bar b}}}^2$.  The conclusion follows, for $\ket{\phi'} = \ket{\phi_{00}} / \norm{\ket{\phi_{00}}}$.  
\end{proof}

The next claim completes the proof by showing that if $\ket{\psi'}$ is $\delta$-close to an EPR state on each of $n$ registers, then it is $O(\sqrt{n} \delta)$-close to $n$ EPR states. 

\begin{claim} \label{t:manyEPRstatesstabilized}
If $\ket \psi \in (\C^d)^{\otimes n} \otimes \H$ and $\ket{\phi_1}, \ldots, \ket{\phi_n} \in (\C^d)^{\otimes (n-1)} \otimes \H$ are states such that for all $j \in [n]$, $\norm{\ket \psi - \ket{1}_j \ket{\phi_j}_{-j}} \leq \delta$, then there exists a state $\ket \varphi \in \H$ such that 
\begin{equation*}
\bignorm{ \ket \psi - \ket{1}^{\otimes n} \ket \varphi } \leq \sqrt{2 n} \delta
 \enspace .
\end{equation*}
\end{claim}

\begin{proof}
Expand $\ket \psi = \sum_{x \in [d]^n} \ket x \ket{\alpha_x}$.  For each $j$, the state $\ket{\phi_j}$ that minimizes $\norm{\ket \psi - \ket{1}_j \ket{\phi_j}_{-j}}$ is given by $\ket{\hat \phi_j} / \norm{\hat \phi_j}$, where $\ket{\hat \phi_j} = \sum_{x : x_j = 1} \ket{x_{-j}} \ket{\alpha_x}$.  Then 
\begin{align*}
\delta^2 
&\geq 
\Bignorm{ \ket \psi - \ket{1}_j \tfrac{1}{\norm{\hat \phi_j}} \ket{\hat \phi_j}_{-j} }^2 \\
&= 
\Bignorm{ \ket{1}_j \Big(1 - \tfrac{1}{\norm{\hat \phi_j}}\Big) \ket{\hat \phi_j} + \big(\ket \psi - \ket{1}_j \ket{\hat \phi_j}\big) }^2 \\
&= 
\big(1 - \norm{\hat \phi_j}\big)^2 + \sum_{x : x_j \neq 1} \norm{\ket{\alpha_x}}^2 \\
&= 
2 - 2 \norm{\ket{\hat \phi_j}}
 \enspace .
\end{align*}
Thus for each $j$, we have $\sum_{x : x_j \neq 1} \norm{\alpha_x}^2 \leq \delta^2$.  Under the constraint $\sum_x \norm{\alpha_x}^2 = 1$ and by a union bound, we have $\norm{\alpha_{1^n}}^2 \geq 1 - n \delta^2$.  Provided that $n \delta^2 < 1$, $\norm{\alpha_{1^n}} > 0$.  

Letting $\ket \varphi = \ket{\alpha_{1^n}} / \norm{\alpha_{1^n}}$, we get 
\begin{align*}
\bignorm{ \ket \psi - \ket{1}^{\otimes n} \ket \varphi }^2
&= 
2 - 2 \norm{\alpha_{1^n}} \\
&\leq 
2 - 2 \sqrt{1 - n \delta^2} \\
&\leq 2 n \delta^2
 \enspace .
\end{align*}
This gives the bound $\norm{\ket \psi - \ket{1}^{\otimes n} \ket \varphi} \leq \sqrt{2 n} \delta$ provided $\delta < 1 / \sqrt n$.  If $\delta \geq 1 / \sqrt n$, it is still easy to choose a state $\ket \varphi$ so $\norm{\ket \psi - \ket{1}^{\otimes n} \ket \varphi} \leq \sqrt 2 \leq \sqrt{2 n} \delta$.  
\end{proof}

\corref{t:EPRpairsfromstabilizers} follows.  
\end{proof}

\begin{proof}[Proof of \thmref{t:EPRstabilizersfrommath}]
The idea for the proof is as follows.  There are $n$ overlapping qubits on Alice's side, and another $n$ overlapping qubits on Bob's side.  The assumptions of the theorem imply that for each~$j$, the joint state of Alice and Bob's qubits~$j$ is close to an EPR state.  For the analysis, we sequentially swap in fresh qubits in order to force a tensor-product structure.  Since there is an underlying state, $\ket \psi$, we need to specify a state for the fresh qubits.  The natural choice is the maximally mixed state, $\tfrac12 I$,  because locally this looks the same as half of an EPR state, and thus Alice and Bob's local operators cannot tell that we have changed the state from underneath them.  On Alice's side, therefore, the local isometry adds $n-1$ EPR states (entirely on Alice's side, not crossing between Alice and Bob).  Half of each EPR state is unused, and we swap in the other half, which looks maximally mixed.  

We proceed with the details.  For each~$j$, let $(\S_j)_\Alice{AA'}$ be the operator on $\H_\Alice{A} \otimes (\C^2)^{\otimes 2n}_\Alice{A'}$ that swaps Alice's $j$th qubit, defined by $(X_j)_\Alice{A}, (Z_j)_\Alice{A}$, with half of the $j$th appended EPR state: 
\begin{equation*}
(\S_j)_\Alice{AA'} = \frac12 \big( \identity \otimes \identity + X_j \otimes \sigma^x_{2j-1} + Z_j \otimes \sigma^z_{2j-1} + i (X_j Z_j) \otimes \sigma^y_{2j-1} \big)_\Alice{AA'}
 \enspace .
\end{equation*}
For $P$ either $X$ or~$Z$, define $(P_j')_\Alice{AA'}$ by 
\begin{equation*}
P_j' = (\S_1 \cdots S_{j-1}) \, (P_j \otimes \identity) \, (\S_{j-1} \cdots \S_1)
 \enspace .
\end{equation*}
Then for $P, Q \in \{X, Z\}$ and $i < j$, 
\begin{equation*}
\norm{[P_i', Q_j']} 
= \norm{[\S_i (P_i \otimes \identity) \S_i, \S_{i+1} \cdots \S_{j-1} (Q_j \otimes \identity) \S_{j-1} \cdots \S_{i+1}]} 
= 0
 \enspace ,
\end{equation*}
since $\S_i (P_i \otimes \identity) \S_i$ is a Pauli on the $(2i-1)$th added qubit, on which the other term has no dependence.  
Let 
\begin{equation*}
(\S_j')_{\Bob{B}\Alice{A'}} = \frac12 \big( \identity \otimes \identity + X_j \otimes \sigma^x_{2j} + Z_j \otimes \sigma^z_{2j} + i (X_j Z_j) \otimes \sigma^y_{2j} \big)_{\Bob{B}\Alice{A'}}
\end{equation*}
be the operator that swaps Bob's $j$th qubit with the other half of Alice's $j$th appended EPR state.  
Then we compute 
\begin{align*}
(P_j')_\Alice{AA'} \ket{\psi'}
&= (\S_1 \cdots \S_{j-1} \, P_j \, \S_{j-1} \cdots \S_1)_\Alice{AA'} \ket{\psi'} \\
&\approx_\epsilon (\S_1 \cdots \S_{j-1} \, P_j \, \S_{j-1})_\Alice{AA'} (\S_1' \cdots \S_{j-2}')_{\Bob{B}\Alice{A'}} \ket{\psi'} && \text{(\lemref{t:pullswapfromAlicetoBob})} \\
&\approx_\epsilon (\S_1 \cdots \S_{j-2} \, P_j)_\Alice{AA'} (\S_1' \cdots \S_{j-2}')_{\Bob{B}\Alice{A'}} \ket{\psi'} && \text{(\lemref{t:pauliswapcommuteondifferentqubits})} \\
&\approx \cdots \\
&\approx_\epsilon (P_j)_\Alice{A} \ket{\psi'}
 \enspace ,
\end{align*}
where each approximation holds up to order $\epsilon$ in Euclidean norm.  The first approximation, pulling $\S_{j-2} \cdots \S_1$ over to Bob's side, is by \lemref{t:pullswapfromAlicetoBob} below.  The next line, commuting $\S_{j-1}$ past $P_j$ on $\ket{\psi'}$, is by \lemref{t:pauliswapcommuteondifferentqubits}.  The argument continues by pulling one $\S_i'$ at a time back to Alice's side, where it can be commuted past $P_j$.  Overall, the error is $\norm{(P_j')_\Alice{AA'} \ket{\psi'} - (P_j)_\Alice{A} \ket{\psi'}} = O(n \epsilon)$.  

The same argument can be repeated for swap operators $(\S_j)_\Bob{BB'}$ defined on Bob's side.  Thus $\norm{(P_j')_\Bob{BB'} \ket{\psi'} - (P_j)_\Bob{B} \ket{\psi'}} = O(n \epsilon)$, and so in particular 
\begin{align*}
\bignorm{ (P_j')_\Alice{AA'} \otimes (P_j')_\Bob{BB'} \ket{\psi'} - \ket{\psi'} } 
&\leq \bignorm{ (P_j' \otimes P_j')_{\Alice{AA'}\Bob{BB'}} \ket{\psi'} - (P_j \otimes P_j)_{\Alice{A}\Bob{B}} \ket{\psi'} } \\
&\quad + \bignorm{ (P_j \otimes P_j)_{\Alice{A}\Bob{B}} \ket{\psi'} - \ket{\psi'} } \\
&= O(n \epsilon)
 \enspace ,
\end{align*}
as claimed.  
\end{proof}

Lemmas~\ref{t:pullswapfromAlicetoBob} and~\ref{t:pauliswapcommuteondifferentqubits} are simple calculations based on the triangle inequality.  

\begin{lemma} \label{t:pullswapfromAlicetoBob}
$\bignorm{(\S_i)_\Alice{AA'} \ket{\psi'} - (\S_i')_{\Bob{B}\Alice{A'}} \ket{\psi'}} = O(\epsilon)$.  
\end{lemma}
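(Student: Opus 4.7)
The plan is to expand both $(\S_i)_\Alice{AA'}$ and $(\S_i')_{\Bob{B}\Alice{A'}}$ into their four Pauli summands (identity, $X$, $Z$, and $iXZ$) and to verify that each summand, applied to $\ket{\psi'}$, matches its counterpart up to Euclidean error $O(\epsilon)$. Once this is done, the overall factor of $\tfrac12$ together with the triangle inequality combines the four bounds into the claimed $O(\epsilon)$ estimate. Throughout, operators that act only on the ancilla factors $\Alice{A'}$ or $\Bob{B'}$ leave the ``$\ket\psi$-part'' of $\ket{\psi'}$ alone, so the $\epsilon$-errors will come exclusively from manipulations on the $\Alice{A}\Bob{B}$ factor.

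The identity terms are literally equal. For the $X$-term, I would compare $(X_i)_\Alice{A}\otimes (\sigma^x_{2i-1})_\Alice{A'}\ket{\psi'}$ with $(X_i)_\Bob{B}\otimes (\sigma^x_{2i})_\Alice{A'}\ket{\psi'}$. The identity $(\sigma^x\otimes\sigma^x)\ket{\EPRstate}=\ket{\EPRstate}$ lets me replace $\sigma^x_{2i-1}$ by $\sigma^x_{2i}$ on the $i$th EPR pair of $\Alice{A'}$ exactly. The remaining discrepancy $\bignorm{(X_i)_\Alice{A}\ket{\psi'}-(X_i)_\Bob{B}\ket{\psi'}}$ is bounded by $\epsilon$ using the stabilizer hypothesis $\bignorm{(X_i\otimes X_i)_{\Alice{A}\Bob{B}}\ket\psi-\ket\psi}\le\epsilon$ together with the fact that $(X_i)_\Bob{B}$ is a reflection; the ancilla factors are untouched, so the bound survives under tensoring with them. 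The $Z$-term is handled identically.

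The only mildly subtle piece is the $i(X_iZ_i)\otimes \sigma^y$ summand, where two sign flips must match. First, $(\sigma^y\otimes\sigma^y)\ket{\EPRstate}=-\ket{\EPRstate}$, so $\sigma^y_{2i-1}$ differs from $\sigma^y_{2i}$ by a minus sign on the $i$th EPR pair of $\Alice{A'}$. Second, to move $X_iZ_i$ from $\Alice{A}$ to $\Bob{B}$ on $\ket\psi$ I apply the stabilizer approximation twice, pulling $Z_i$ across first and then $X_i$, using that any Alice operator commutes exactly with any Bob operator; the anticommutation $\{(X_i)_\Bob{B},(Z_i)_\Bob{B}\}=0$ then produces a second minus sign, yielding $\bignorm{(X_iZ_i)_\Alice{A}\ket\psi+(X_iZ_i)_\Bob{B}\ket\psi}=O(\epsilon)$. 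Multiplying the two minus signs together cancels them, so the two sides of the lemma agree on this summand up to $O(\epsilon)$.

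The main (and essentially only) obstacle is keeping the two sign flips consistent in the $Y$-term; everything else is a mechanical triangle-inequality calculation whose total contribution is a universal constant multiple of $\epsilon$, which suffices.
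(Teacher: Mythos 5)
Your proof is correct, but it takes a genuinely different route from the paper's. You expand both swap operators into their four Pauli summands and match them term by term: the identity terms agree exactly, the $X$ and $Z$ terms each cost one application of the stabilizer hypothesis (after the exact replacement $\sigma^x_{2i-1} \leftrightarrow \sigma^x_{2i}$ on the appended EPR pair), and the $Y$ term costs two such applications, with the minus sign from $(\sigma^y \otimes \sigma^y)\ket{\EPRstate} = -\ket{\EPRstate}$ cancelling the minus sign from $\{(X_i)_\Bob{B}, (Z_i)_\Bob{B}\} = 0$ when $X_iZ_i$ is pulled across in two steps---your sign bookkeeping there is right, and since you only ever multiply by unitaries ($(P_i)_\Bob{B}$, $\sigma^y_{2i}$), each $\epsilon$-bound is preserved, giving a total of $\tfrac12(0 + \epsilon + \epsilon + 2\epsilon) = 2\epsilon$. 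The paper instead argues globally: by \claimref{t:EPRstatestabilized}, the hypotheses $\bignorm{(P_i \otimes P_i)_{\Alice{A}\Bob{B}}\ket\psi - \ket\psi} \leq \epsilon$ imply $\ket{\psi'}$ is $c\epsilon$-close to a state $\ket{\psi''}$ that is \emph{exactly} an EPR state on Alice's and Bob's $i$th qubits, and on such a state the two swaps agree exactly ($(\S_i)_\Alice{AA'}\ket{\psi''} = (\S_i')_{\Bob{B}\Alice{A'}}\ket{\psi''}$, the content of \figref{f:swappinghalvesoftwoEPRstates}: swapping either half of one EPR pair with the corresponding half of another yields the same result), whence a single triangle inequality gives the bound $2c\epsilon$. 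The paper's argument is shorter and conceptually crisper, reusing \claimref{t:EPRstatestabilized}, but it implicitly relies on the qubit structure (the isometry $\H_D \cong \C^2 \otimes \hat\H_D$ induced by the anticommuting reflections $X_i, Z_i$) even to make sense of ``restricted on Alice's and Bob's $i$th qubits is an EPR state''; your term-by-term calculation is more pedestrian but entirely self-contained, never invokes that isometry or \claimref{t:EPRstatestabilized}, and yields an explicit constant.
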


\begin{proof}
\begin{figure}
\centering
\includegraphics[scale=.12]{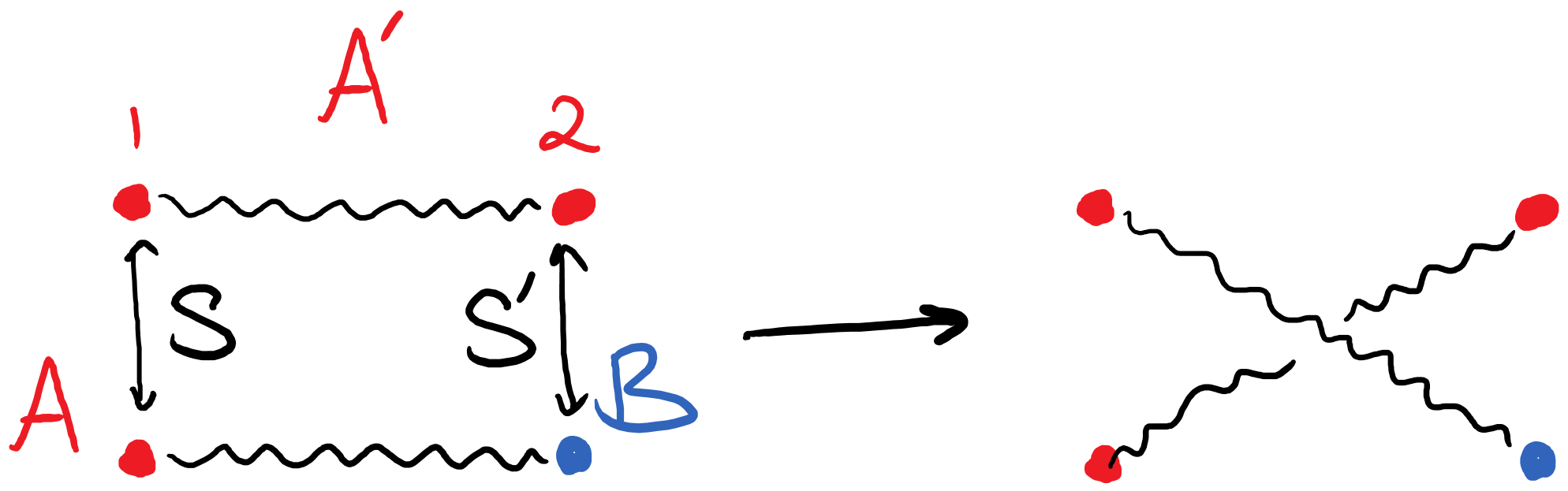}
\caption{Two ways of swapping halves of EPR states, that both lead to the same result.} \label{f:swappinghalvesoftwoEPRstates}
\end{figure}

If we had $(X_i)_\Alice{A} (X_i)_\Bob{B} \ket \psi = \ket \psi = (Z_i)_\Alice{A} (Z_i)_\Bob{B} \ket \psi$, then we would have $(\S_i)_\Alice{AA'} \ket{\psi'} = (\S_i')_{\Bob{B}\Alice{A'}} \ket{\psi'}$ exactly, since the EPR state is uniquely stabilized by both $\sigma^x \otimes \sigma^x$ and $\sigma^z \otimes \sigma^z$.  See \figref{f:swappinghalvesoftwoEPRstates}.  Thus $\bignorm{ (\S_i)_\Alice{AA'} \ket{\psi'} - (\S_i')_{\Bob{B}\Alice{A'}} \ket{\psi'} } \leq \bignorm{\ket{\psi'} - \ket{\psi''}} + \bignorm{\ket{\psi''}-(\S_i)_\Alice{AA'}(\S_i')_{\Bob{B}\Alice{A'}}\ket{\psi'}} \leq 2 c \epsilon$, where $\ket{\psi''}$ restricted on Alice's and Bob's $i$th qubits is an EPR state and $c$ is the constant from \claimref{t:EPRstatestabilized}.
\end{proof}

\begin{lemma} \label{t:pauliswapcommuteondifferentqubits}
For $i \neq j$, $\bignorm{ [\S_i, P_j \otimes \identity]_\Alice{AA'} \ket{\psi'} } = O(\epsilon)$.  
\end{lemma}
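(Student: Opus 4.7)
The plan is to expand the swap operator $\S_i$ in the Pauli basis and treat each term of the resulting commutator separately. Writing
\[
\S_i = \tfrac12\big(\identity \otimes \identity + X_i \otimes \sigma^x_{2i-1} + Z_i \otimes \sigma^z_{2i-1} + i X_i Z_i \otimes \sigma^y_{2i-1}\big)_\Alice{AA'},
\]
and noting that $P_j \otimes \identity$ is trivial on $\Alice{A'}$, the commutator decomposes as
\[
[\S_i, P_j \otimes \identity]_\Alice{AA'} = \tfrac12\big([X_i, P_j] \otimes \sigma^x_{2i-1} + [Z_i, P_j] \otimes \sigma^z_{2i-1} + i [X_i Z_i, P_j] \otimes \sigma^y_{2i-1}\big)_\Alice{AA'}.
\]
Since $\ket{\psi'} = \ket\psi \otimes \ket{\EPRstate}^{\otimes n}_\Alice{A'} \otimes \ket{\EPRstate}^{\otimes n}_\Bob{B'}$ factorizes and each single-qubit Pauli on~$\Alice{A'}$ is a unitary acting on a factor disjoint from the other operators, the first two summands, applied to $\ket{\psi'}$, have norms equal to $\bignorm{[X_i, P_j]_\Alice{A} \ket\psi}$ and $\bignorm{[Z_i, P_j]_\Alice{A} \ket\psi}$, both $\leq \epsilon$ by hypothesis.

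The main obstacle is the cross term $[X_i Z_i, P_j]\ket\psi$, since $X_iZ_i$ is not one of the Paulis that the approximate-commutation hypothesis directly controls. I would apply the Leibniz rule to split it:
\[
[X_i Z_i, P_j]\ket\psi = X_i [Z_i, P_j]\ket\psi + [X_i, P_j]\,Z_i\ket\psi.
\]
The first summand has norm $\leq \epsilon$ since $X_i$ is a reflection and $\bignorm{[Z_i, P_j]\ket\psi} \leq \epsilon$. The truly subtle piece is the second summand, $[X_i, P_j] Z_i\ket\psi$: the hypothesis gives a bound on $[X_i, P_j]$ applied to $\ket\psi$, not on $[X_i, P_j]$ applied to the perturbed vector $Z_i\ket\psi$.

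To handle this, I would use the stabilizer condition $\bignorm{(Z_i)_\Alice{A}\otimes (Z_i)_\Bob{B}\ket\psi - \ket\psi} \leq \epsilon$, which, by applying the reflection $(Z_i)_\Alice{A}$, gives $\bignorm{(Z_i)_\Alice{A}\ket\psi - (Z_i)_\Bob{B}\ket\psi} \leq \epsilon$. This lets me replace $(Z_i)_\Alice{A}\ket\psi$ by $(Z_i)_\Bob{B}\ket\psi$ inside the expression at cost at most $2\epsilon$ (the operator norm of the reflection-commutator $[X_i,P_j]$ is at most~$2$), after which the exact commutation between $(Z_i)_\Bob{B}$ and any operator on $\H_\Alice{A}$ yields $[X_i, P_j]_\Alice{A} (Z_i)_\Bob{B}\ket\psi = (Z_i)_\Bob{B}[X_i, P_j]_\Alice{A}\ket\psi$, whose norm is $\leq \epsilon$. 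This is exactly the ``pull an operator across the bipartition'' device highlighted in the introduction: the obstruction to closing the commutator on a single Hilbert space is bypassed by routing through the other party, where commutation becomes exact. A final triangle inequality over all summands gives the desired $O(\epsilon)$ bound.
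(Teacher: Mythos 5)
Your proposal is correct and takes essentially the same approach as the paper's own (one-sentence) proof: expand $\S_i$ into its four Pauli terms and bound each resulting commutator on $\ket{\psi'}$ via the hypothesis $\norm{[P_i,Q_j]_\Alice{A}\ket\psi}\le\epsilon$. The paper leaves the $i\,X_iZ_i\otimes\sigma^y_{2i-1}$ cross term implicit, and your treatment of it---the Leibniz splitting followed by pulling $(Z_i)_\Alice{A}$ over to $(Z_i)_\Bob{B}$ via the stabilizer hypothesis, where commutation with $[X_i,P_j]_\Alice{A}$ becomes exact---correctly closes that step, using the same ``pulling'' device the paper employs in \lemref{t:pullswapfromAlicetoBob} and the proof of \thmref{t:EPRstabilizersfrommath}.
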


\begin{proof}
Use the definition of $\S_i$ in terms of $X_i$ and $Z_i$, which by assumption nearly commute with $P_j$ on $\ket \psi$.  
\end{proof}

\section{A protocol for testing $n$ qubits} \label{s:protocol}

It remains to explain how the mathematical assumptions of \thmref{t:EPRstabilizersfrommath} can be established in an experiment.  The assumption that $(P_j)_\Alice{A} \otimes (P_j)_\Bob{B} \ket \psi \approx \ket \psi$ is straightforward to establish by testing for an EPR state using the CHSH game; see \secref{s:CHSHgame} below. The EPR state {$\ket{\EPRstate} = \tfrac{1}{\sqrt 2}(\ket{00} + \ket{11})$} is an eigenvalue-one eigenvector of both $\sigma^x \otimes \sigma^x$ and $\sigma^z \otimes \sigma^z$.  

The assumption that $[P_i, Q_j]_D \ket \psi \approx 0$ for any $i \neq j$ seems more difficult to establish, because it involves $(P_i)_D$ acting on $(Q_j)_D \ket \psi$, not just on $\ket \psi$ directly.  The main idea to obtain constraints on quantities such as $(P_i Q_j)_D \ket \psi$ is to use both players, and design a test where the second player performs a measurement that relates to both $P_i$ and $Q_j$.  In \secref{s:EPRpairprotocol} we present a simple protocol for which the soundness analysis uses this argument.

\subsection{The CHSH game} \label{s:CHSHgame}

In \secref{s:EPRpairprotocol} below we give a protocol that can experimentally establish the assumptions of \thmref{t:EPRstabilizersfrommath}.  This protocol is built on the Clauser-Horne-Shimony-Holt (CHSH) game~\cite{ClauserHorneShimonyHolt69chshgame}.  Before presenting our protocol, let us review the CHSH game.  

The CHSH game involves a verifier and two players.  The verifier chooses uniformly random bits $a, b \in \{0,1\}$.  As shown in \figref{f:chshgame}, she sends $a$ to the first player, and~$b$ to the second player.  The players share a quantum state but cannot communicate.  Each player applies a two-outcome measurement, depending on her input.  They return their respective outcomes, $x$~and~$y$ in $\{1, -1\}$, to the verifier.  The players win if $x y = (-1)^{a b}$.  

\begin{figure}
\centering
\includegraphics[scale=1]{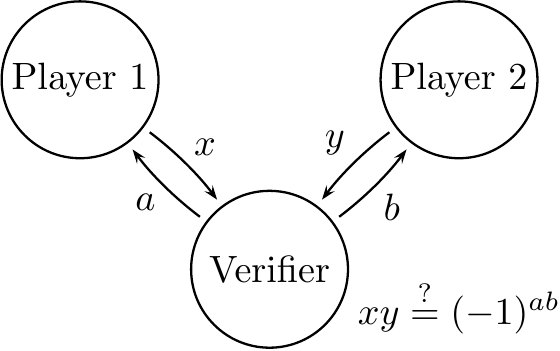}
\caption{The CHSH game.  The messages $a, b$ are in $\{0,1\}$, and $x, y \in \{1, -1\}$.} \label{f:chshgame}
\end{figure}

A general strategy for the CHSH game consists of Hilbert spaces $\H_1, \H_2$, a state $\ket \psi \in \H_1 \otimes \H_2$, and reflections $(R_0)_D, (R_1)_D \in \L(\H_D)$ for $D \in \{1, 2\}$.  On receiving $c \in \{0,1\}$, each player measures $R_c$ and returns the outcome.  Using this strategy, the players win with probability 
\begin{equation*}
\frac12 + \frac18 \bra \psi \big( R_0 \otimes R_0 + R_0 \otimes R_1 + R_1 \otimes R_0 - R_1 \otimes R_1 \big) \ket \psi
 \enspace .
\end{equation*}

The maximum probability of winning is $\cos^2 \tfrac{\pi}{8} \approx 0.85$.  A strategy achieving this probability uses $\ket \psi = \ket{\EPRstate} \in \C^2 \otimes \C^2$, and the reflections $(R_0)_1 = \sigma^z$, $(R_1)_1 = \sigma^x$ for player~$1$, and $(R_0)_2 = H = \tfrac{1}{\sqrt 2}\big(\begin{smallmatrix}1&1\\1&-1\end{smallmatrix}\big)$, $(R_1)_2 = - \sigma^x H \sigma^x$ for player~$2$.  

In fact, up to local changes of basis and choice of an ancillary state, this is the only optimal strategy~\cite{McKagueYangScarani12chshrigidity, ReichardtUngerVazirani12qmip}.  We will use: 

\begin{lemma}[{\cite[Lemma~4.2]{ReichardtUngerVazirani12qmip}}] \label{t:eprlemma}
Consider a strategy for the CHSH game that wins with probability at least $\cos^2 \tfrac\pi8 - \gamma$ for some $\gamma > 0$.  Assume that for $D \in \{1, 2\}$, the reflections $(R_c)_D$ each have both $\pm 1$ eigenspaces of the same dimension.  Then the following properties hold: 
\begin{itemize}
\item
For each $D \in \{1, 2\}$, there is an isomorphism between player~$D$'s Hilbert space and $\C^2 \otimes \hat \H_D$, under which $(R_0)_D = \sigma^z \otimes \identity$ and $\bignorm{(R_1 - \sigma^x \otimes \identity)_D \ket \psi} = O(\sqrt \gamma)$.  The isomorphism depends only on $(R_0)_D$ and $(R_1)_D$.  
\item 
There exists a unit vector $\ket{\hat \psi} \in \hat \H_1 \otimes \hat \H_2$ such that for $G = \exp(-i \tfrac\pi8 \sigma^y)$, 
\begin{equation*}
\bignorm{\ket \psi - (I \otimes (H G) \ket{\EPRstate}) \ket{\hat \psi}} = O(\sqrt \gamma)
 \enspace .
\end{equation*}
\end{itemize}
\end{lemma}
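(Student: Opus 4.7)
This is a rigidity statement for the CHSH game in the Mayers--Yao / McKague--Yang--Scarani tradition. The plan is to follow the sum-of-squares approach used in Reichardt--Unger--Vazirani, in three stages: (a) extract approximate anticommutation of each player's two reflections on $\ket\psi$ via Tsirelson's trick; (b) use the balanced-eigenspace hypothesis to build the local isomorphisms $\H_D \cong \C^2 \otimes \hat\H_D$ putting $R_0$ exactly and $R_1$ approximately into Pauli form; (c) promote the near-optimal bias into approximate stabilizer relations on $\ket\psi$ and invoke \claimref{t:EPRstatestabilized} to peel off an ancilla.

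For stage~(a), write $A_c = (R_c)_1$, $B_c = (R_c)_2$, so the hypothesis becomes $\bra\psi(A_0\otimes B_0 + A_0\otimes B_1 + A_1\otimes B_0 - A_1\otimes B_1)\ket\psi \geq 2\sqrt 2 - 8\gamma$. Setting $S = (A_0+A_1)\otimes B_0 + (A_0-A_1)\otimes B_1$, direct expansion using $A_c^2 = B_c^2 = \identity$ on disjoint factors yields $S^2 = 4\,\identity\otimes\identity - [A_0,A_1]\otimes[B_0,B_1]$. The Cauchy--Schwarz bound $(\bra\psi S\ket\psi)^2 \leq \bra\psi S^2\ket\psi$ then forces $-\bra\psi[A_0,A_1]\otimes[B_0,B_1]\ket\psi \geq 4 - O(\gamma)$; since each commutator has operator norm at most~$2$, a second Cauchy--Schwarz yields $\bignorm{[A_0,A_1]\ket\psi}, \bignorm{[B_0,B_1]\ket\psi} \geq 2 - O(\gamma)$. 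The reflection identity $\{A_0,A_1\}^2 = 4\,\identity + [A_0,A_1]^2$ then gives $\bignorm{\{A_0,A_1\}\ket\psi}, \bignorm{\{B_0,B_1\}\ket\psi} = O(\sqrt\gamma)$.

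For stage~(b), on side~$D$ fix any unitary identifying the $\pm 1$ eigenspaces of $(R_0)_D$ (possible by the balanced-eigenspace hypothesis), giving $\H_D \cong \C^2\otimes\hat\H_D$ with $R_0 = \sigma^z\otimes\identity$. Writing $R_1$ in block form $\smatrx{P & Q \\ Q^\dagger & R}$, the approximate anticommutation on $\ket\psi$ forces $P\ket\psi, R\ket\psi = O(\sqrt\gamma)$, so $R_1$ is approximately block-off-diagonal. The polar decomposition $Q = V\abs{Q}$, extended to a unitary on~$\hat\H_D$ using the balanced dimensions and then absorbed into a further basis change of~$\hat\H_D$, yields an isomorphism (depending only on $(R_0)_D, (R_1)_D$) under which $R_0 = \sigma^z\otimes\identity$ exactly and $R_1 = \sigma^x\otimes\identity + O(\sqrt\gamma)$ on $\ket\psi$.

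For stage~(c), rewrite the hypothesis as $\bra\psi A_0\otimes\tilde B_0 + A_1\otimes\tilde B_1\ket\psi \geq 2 - O(\gamma)$ with $\tilde B_c = (B_0 + (-1)^c B_1)/\sqrt 2$; the identity $\tilde B_0^2 + \tilde B_1^2 = 2\,\identity$ allows one to expand
\begin{equation*}
\bignorm{(A_0\otimes\identity - \identity\otimes\tilde B_0)\ket\psi}^2 + \bignorm{(A_1\otimes\identity - \identity\otimes\tilde B_1)\ket\psi}^2 = O(\gamma),
\end{equation*}
so both cross-correlations hold to $O(\sqrt\gamma)$. Under the stage~(b) isomorphisms $A_0, A_1$ become Pauli on Alice's qubit, and tracking how Bob's isomorphism (designed to simultaneously diagonalize $B_0, B_1$ into $\sigma^z, \sigma^x$ on $\ket\psi$) acts on $\tilde B_c$ converts the cross-correlations into approximate stabilizer relations $(\sigma^z_A\otimes\sigma^z_B)\ket\psi\approx\ket\psi$ and $(\sigma^x_A\otimes\sigma^x_B)\ket\psi\approx\ket\psi$ on the qubit factors. \claimref{t:EPRstatestabilized} then splits the qubit part as $\ket{\EPRstate}\otimes\ket{\hat\psi}$ with $\ket{\hat\psi}\in\hat\H_1\otimes\hat\H_2$. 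The explicit $HG$ in the statement appears because in the ideal strategy the unique unitary on Bob's qubit simultaneously conjugating $(\sigma^z, \sigma^x)$ back to $(H, -\sigma^x H\sigma^x)$ is $\sigma^z G^\dagger = HG$, and this pullback persists in the approximate case. The main obstacle will be bookkeeping: keeping the isomorphism canonical in the $(R_0, R_1)$-dependent sense claimed, cleanly controlling the $O(\sqrt\gamma)$ error through the polar-decomposition extension, and verifying that the Bob-side pullback really does reduce to $HG$ in the given conventions; all intermediate bounds come from triangle inequalities and uniform operator-norm bounds on the operators in play.
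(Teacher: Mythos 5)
The paper does not actually prove this lemma---it is imported verbatim from the cited reference (Lemma~4.2 of Reichardt--Unger--Vazirani)---and your sketch follows essentially the same route as that source's proof: Tsirelson's sum-of-squares trick giving $\bignorm{\{R_0,R_1\}_D\ket\psi} = O(\sqrt\gamma)$, a balanced-eigenspace/polar-decomposition rounding of $R_1$ into $\sigma^x\otimes\identity$ form, and conversion of the near-optimal bias into approximate stabilizer relations that peel off $\ket{\EPRstate}$ via \claimref{t:EPRstatestabilized}. Your outline is sound, and the two identities on which it pivots check out in the paper's conventions: $S^2 = 4\,\identity\otimes\identity - [A_0,A_1]\otimes[B_0,B_1]$, and $HG = G\sigma^z$, so that conjugation by $HG$ indeed carries $(\sigma^z,\sigma^x)$ to $(H,\,-\sigma^x H\sigma^x)$ as needed to account for the $HG$ in the statement.
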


Thus the CHSH game provides a test to establish that both players measure a single qubit in a shared EPR state.  Our protocol for testing $n$ qubits uses many embedded CHSH games.

\subsection{A protocol for testing $n$ qubits} \label{s:EPRpairprotocol}

In this section, we describe and analyze a simple protocol, between a verifier and two players, Alice and Bob, that can experimentally establish the assumptions of \thmref{t:EPRstabilizersfrommath}.  Recall, these assumptions are that Alice and Bob each have $n$ overlapping qubits, such that the joint two-qubit state of their $j$th qubits is close to an EPR state, for every~$j$, and with Pauli operators on different qubits nearly commuting on $\ket \psi$.  Intuitively, approximate EPR states are easily established using the CHSH game.  The last commutation assumption is harder to establish.  

For example, one possible protocol might have the verifier choose a random index $j \in [n]$, send $j$ to both players, and ask them to play a CHSH game on the $j$th of $n$ shared EPR states.  Honest players, who indeed shared $\ket{\EPRstate}{}^{\otimes n}$, could pass this protocol.  However, dishonest players could also pass this protocol by ignoring~$j$ and using always the same EPR state to play all CHSH games.  In this case, each player's $n$ qubits would all be the same qubit, so the protocol is inadequate.  

Our protocol overcomes this flaw.  It is as follows: 

{ \noindent \hrulefill \\
\centering \textbf{Protocol for testing $n$ qubits} \\ } \smallskip

A verifier interacts with two players, Alice and Bob.  With equal probabilities $1/3$, the verifier runs one of the following sub-protocols: 
\begin{enumerate}
\item Choose $i \in [n]$ and $a \in \{0,1\}$ uniformly at random.  Send $(i, a)$ to each player.  Alice and Bob reply, respectively, with $x, y \in \{1, -1\}$.  Accept if and only if $x = y$.  
\item Choose $i, j \in [n]$, $i \neq j$, and bits $a, b, c \in \{0, 1\}$ uniformly at random.  Send Alice $(i, a)$, and send Bob $\big((i, b), (j, c) \big)$ if $i < j$, or $\big((j, c), (i, b)\big)$ if $j < i$.  Alice returns $x \in \{\pm 1\}$.  Bob returns $y$ and $y'$ in $\{\pm 1\}$, ordered to correspond respectively to $(i, b)$ and $(j, c)$.  Accept if and only if $x y = (-1)^{a b}$.  
\item The same as sub-protocol $2$, but with the roles of Alice and Bob exchanged.  
\end{enumerate}
\vspace{-1\baselineskip}
\hrulefill
\medskip

Observe that embedded within this protocol are $4 n (n-1)$ CHSH games.  In sub-protocol~$2$, for example, for every $i \neq j$ and $c \in \{0,1\}$, Alice and Bob are playing a CHSH game.  In general, Alice's strategy for the CHSH game can depend on~$i$, while Bob's strategy can depend on $i, j$ and~$c$.  

The key ingredient in the protocol is the ``dummy" question~$j$ used in parts $2$ and $3$.  Given the indices $\{i, j\}$, Bob does not know which index was given to Alice.  Intuitively, this forces Bob to measure two independent qubits for his responses $y$ and~$y'$.  In our analysis below, it allows us to establish the state-dependent commutation relationship between Alice's $i$th and $j$th qubits: $\max_{P, Q \in \{X, Z\}} \norm{[P_i, Q_j]_\Alice{A} \ket \psi} \approx 0$.  The dummy question is related to the technique of oracularization in complexity theory.  To our knowledge, this was first used in the context of two-player quantum games in~\cite{ItoKobayashiMatsumoto09oracularization}.  

Note that Alice cannot distinguish between the first and second sub-protocols, and Bob cannot distinguish between the first and third sub-protocols.  The role of the first sub-protocol is to establish consistency between Alice and Bob's $i$th qubits, without dependence on $(j, c)$.\footnote{The first sub-protocol could also be replaced with standard CHSH games.}  

Let us now analyze the completeness and soundness of this protocol.

\subsubsection{Completeness} \label{s:EPRtestingcompleteness}

Honest players Alice and Bob, sharing $n$ EPR states, $\ket{\EPRstate}{}^{\otimes n}$, play as follows: 

\begin{itemize}
\item 
If given a single tuple $(i, a)$, the player measures her share of the $i$th EPR state with $\sigma^z$ if $a = 0$, or with $\sigma^x$ if $a = 1$.  She returns the measured eigenvalue.  
\item 
If given two tuples $(i, b)$ and $(j, c)$, the player measures her share of the $i$th EPR state with $H = \tfrac{1}{\sqrt 2}\big(\begin{smallmatrix}1&1\\1&-1\end{smallmatrix}\big)$ if $b = 0$, or with $-\sigma^x H \sigma^x$ if $b = 1$.  She returns the measured eigenvalue.  To determine~$y'$, she uses the same rules to measure the $j$th EPR state depending on~$c$.  
\end{itemize}

\noindent
These strategies correspond to using the indicated qubits $i$ and~$j$ to play CHSH games following the optimal strategy from \secref{s:CHSHgame}, where the player given a single tuple takes the role of CHSH game player~$1$ and the player given two tuples takes the role of CHSH game player~$2$.  

When Alice and Bob follow this honest strategy, they pass sub-protocol $1$ with probability~$1$, and they pass sub-protocols $2$ and~$3$ with probability $\cos^2 \tfrac{\pi}{8}$, the optimal success probability for a CHSH game.  Overall the verifier accepts with probability 
\begin{equation} \label{e:omegaoptdef}
\omegaopt := \tfrac{1}{3}(1 + 2 \cos^2 \tfrac{\pi}{8}) \approx 0.90
 \enspace .
\end{equation}

\subsubsection{Soundness}

In general, a strategy for Alice and Bob consists of: 
\begin{itemize}
\item 
Finite-dimensional Hilbert spaces $\H_\Alice{A}, \H_\Bob{B}$, and a shared state $\ket \psi \in \H_\Alice{A} \otimes \H_\Bob{B}$.  
\item 
Reflections $(Z_i)_D, (X_i)_D \in \L(\H_D)$, for $D \in \{ \Alice{A}, \Bob{B} \}$.  On question $(i, 0)$, player~$D$ measures $(Z_i)_D$ and returns the measured eigenvalue~$\pm 1$; on question $(i, 1)$, she returns the result of measuring $(X_i)_D$.  Note that $(Z_i)_D$ and $(X_i)_D$ need not anti-commute.  
\item 
For each $D \in \{ \Alice{A}, \Bob{B} \}$, for every $i < j$ and $b, c \in \{0,1\}$, a complete set of orthogonal projections $\{ (\Pi^{(i,b),(j,c)}_{x,y})_D : x, y \in \{\pm 1\} \}$.  On receiving question $\big( (i, b), (j, c) \big)$, player~$D$ measures $\ket \psi$ with the projections and returns the results $x, y$.  
\end{itemize}
Naimark's theorem ensures the assumption of projective measurements is without loss of {generality}.  

We state the results of the soundness analysis as a theorem: 

\begin{theorem} \label{t:nEPRsoundness}
Consider a strategy that is accepted with probability at least $\omegaopt - \delta$, with $\delta > 0$.  Then there exist spaces $\H'_\Alice{A}$, $\H'_\Bob{B}$ extending $\H_\Alice{A}, \H_\Bob{B}$ respectively, and extensions of the $(Z_i)_D$ to $\H'_D$ by a direct sum with other reflections, and reflections $(\bar X_i)_D$ such that $\{ (Z_i)_D, (\bar X_i)_D \} = 0$, $\norm{ (X_i - \bar X_i)_D \ket \psi } = O(n \sqrt \delta)$, and 
\begin{align*}
\max_{\substack{i, j \in [n], \, i \neq j \, \\ P, Q \in \{\bar X, Z\}}} \! \bignorm{ [P_i, Q_j]_D \ket \psi } &= O(n \sqrt \delta) \\
\max_{i \in [n], P \in \{\bar X, Z\}} \bignorm{ (P_i)_\Alice{A} (P_i)_\Bob{B} \ket \psi - \ket \psi } &= O(n \sqrt \delta)
 \enspace .
\end{align*}
\end{theorem}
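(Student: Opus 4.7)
My plan is to break the proof into four steps: first, distribute the overall shortfall $\delta$ across the three sub-protocols; second, extract state-dependent consistency from sub-protocol 1; third, extract approximate anti-commutation from the embedded CHSH games in sub-protocols 2 and 3, and use this to construct the reflections $\bar X_i$; and fourth, derive the approximate commutation $[P_i, Q_j]_D \ket \psi \approx 0$ by exploiting the dummy-question structure.

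Since the three sub-protocols are chosen with equal probability $1/3$ and have maximum winning probabilities $1$, $\cos^2\tfrac{\pi}{8}$, $\cos^2\tfrac{\pi}{8}$, an overall acceptance probability of at least $\omegaopt - \delta$ forces each sub-protocol to pass within $3\delta$ of its own optimum. Summing the $2n$ squared errors from sub-protocol 1 and taking a square root yields $\bignorm{(P_i)_\Alice{A}(P_i)_\Bob{B}\ket\psi - \ket\psi} = O(\sqrt{n\delta}) = O(n\sqrt\delta)$ for every $i$ and $P \in \{X, Z\}$. The CHSH games embedded in sub-protocols 2 and 3---between Alice's $(X_i, Z_i)_\Alice{A}$ and the $y$-marginal $(B^{(i,b),(j,c)}_y)_\Bob{B}$ of Bob's joint measurement, and symmetrically---are near-optimal on average; applying \lemref{t:eprlemma} and averaging gives $\bignorm{\{X_i, Z_i\}_D \ket\psi} = O(n\sqrt\delta)$ for $D \in \{\Alice{A}, \Bob{B}\}$. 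I then extend each $\H_D$ to $\H'_D$ so that every $(Z_i)_D$ has balanced $\pm 1$ eigenspaces (e.g.\ by direct-summing with a copy of $-Z_i$), write $X_i$ as a $2 \times 2$ block matrix in the $Z_i$-eigenbasis, and define $(\bar X_i)_D$ by unitarizing (via polar decomposition) its off-diagonal block; the approximate anti-commutation controls the discarded diagonal block and yields $\bignorm{(X_i - \bar X_i)_D\ket\psi} = O(n\sqrt\delta)$.

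The crucial step is establishing $\bignorm{[P_i, Q_j]_D \ket\psi} = O(n\sqrt\delta)$ for $i \neq j$. The key observation enabled by the dummy question is that Bob's two responses $y, y'$ in sub-protocol 2 arise from a single projective measurement, so the corresponding marginal reflections on $\H_\Bob{B}$ commute \emph{exactly}. I will apply \lemref{t:eprlemma} separately to the two CHSH games associated with these marginals---one involving Alice's $(X_i, Z_i)_\Alice{A}$ and one involving Alice's $(X_j, Z_j)_\Alice{A}$---to express, on $\ket\psi$, Bob's $(Z_i)_\Bob{B}$ as a normalized sum $\tfrac{1}{\sqrt 2}(B_0 + B_1)_\Bob{B}$ of his $y$-marginal reflections (using consistency from sub-protocol 1 together with the CHSH-rigidity identity $\tfrac{1}{\sqrt 2}(R_0 + R_1) \approx \sigma^z$), and similarly for $(Z_j)_\Bob{B}$ in terms of his $y'$-marginal reflections. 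The exact commutation of these marginals then propagates to $\bignorm{[Z_i, Z_j]_\Bob{B}\ket\psi} = O(n\sqrt\delta)$, and one further application of consistency transfers the bound back to Alice; the mixed-Pauli combinations and the $D = \Alice{A}$ case are obtained by the same argument, the latter via sub-protocol 3 in which Alice holds the joint measurement.

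\textbf{Main obstacle.} The delicate part is the last step: state-dependent relations $M\ket\psi \approx N\ket\psi$ do not automatically extend to $MA\ket\psi \approx NA\ket\psi$ for arbitrary bounded $A$, so the chain of substitutions has to be arranged so every approximation is applied directly to $\ket\psi$. I plan to shuttle operators across the Alice--Bob tensor bipartition using consistency (where operators on different factors commute for free) to keep every approximation anchored on $\ket\psi$, and to align the $(b, c)$-dependencies of Bob's marginals with the averages demanded by CHSH rigidity, so that errors accumulate linearly rather than multiplicatively in $n$.
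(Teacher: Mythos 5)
Your proposal tracks the paper's own proof closely in its architecture: distributing the deficit as $3\delta$ per sub-protocol and applying Markov over the conditioned games, extracting the consistency bound $\norm{(P_i)_A (P_i)_B \ket\psi - \ket\psi} = O(\sqrt{n\delta})$ from sub-protocol 1, invoking \lemref{t:eprlemma} on each of the $4n(n-1)$ embedded CHSH games with per-game deficit $O(n^2\delta)$, and exploiting the exact commutation \eqnref{e:commutingmeasurements} between marginals of Bob's joint measurement, with every state-dependent approximation anchored on $\ket\psi$ by shuttling operators across the tensor bipartition --- the obstacle you correctly identify. (One cosmetic difference: the paper obtains $(\bar X_i)_D$ directly as the preimage of $\sigma^x \otimes \identity$ under the isomorphism of \lemref{t:eprlemma}, which anticommutes with $Z_i$ exactly and depends only on $(X_i, Z_i)$, hence only on~$i$; your balanced-extension-plus-polar-decomposition construction is the standard way that isomorphism is built, so this is equivalent in substance.)

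There is, however, one concrete step that fails as written. You propose to first form normalized sums, $(Z_i)_B\ket\psi \approx \tfrac{1}{\sqrt2}\big(R^{jc}_{i0} + R^{jc}_{i1}\big)_B\ket\psi$ and $(Z_j)_B\ket\psi \approx \tfrac{1}{\sqrt2}\big(R^{ib}_{j0} + R^{ib}_{j1}\big)_B\ket\psi$, and then invoke the ``exact commutation of these marginals.'' But \eqnref{e:commutingmeasurements} holds only between \emph{matched} marginals of the \emph{same} joint measurement: $R^{jc}_{ib}$ commutes with $R^{ib}_{jc}$, where the $i$-marginal's input $b$ must equal the $b$ appearing in the $j$-marginal's superscript (and likewise for~$c$). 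Your sums mix marginals of four different joint measurements, so in the product $\tfrac12\big(R^{jc}_{i0}+R^{jc}_{i1}\big)\big(R^{ib}_{j0}+R^{ib}_{j1}\big)$ only the single matched cross term commutes exactly; the other three carry no commutation guarantee whatsoever, and since all of these operators live on Bob's side, no free tensor-factor commutation is available to rescue them. The repair is to reverse the order of summation and commutation, which is what the paper does: since $\{Z_i, \bar X_i\} = 0$, each $\tfrac{1}{\sqrt2}(Z_i + (-1)^b \bar X_i)$ is itself a reflection satisfying $\tfrac{1}{\sqrt2}(Z_i + (-1)^b \bar X_i)_A \, (R^{jc}_{ib})_B \ket\psi \approx \ket\psi$; one proves, for each \emph{fixed} pair $(b,c)$, that $\bignorm{[Z_i + (-1)^b \bar X_i,\, Z_j + (-1)^c \bar X_j]_A \ket\psi} = O(n\sqrt\delta)$ by chaining these relations through the single matched commutation --- keeping Alice's rotated reflections on one factor and Bob's marginals on the other --- and only at the end takes sums and differences over $b, c \in \{0,1\}$ to isolate the four commutators $[P_i, Q_j]$. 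Your closing remark about ``aligning the $(b,c)$-dependencies'' gestures at exactly this, but as scheduled in your final step the alignment comes too late: sum-then-commute does not work; commute-then-sum does.
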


In particular, the conclusions of this theorem are exactly the assumptions required to apply \thmref{t:EPRstabilizersfrommath} and \corref{t:EPRpairsfromstabilizers}, with $\epsilon = O(n \sqrt \delta)$.  Thus, combining \thmref{t:nEPRsoundness} and \corref{t:EPRpairsfromstabilizers}, directly leads to the robustness guarantee $O(n^{5/2} \sqrt \delta)$ claimed in the introduction.  

\begin{proof}
For each $D \in \{ \Alice{A}, \Bob{B} \}$, for $i < j$, define reflections $(R^{jc}_{ib})_D$ by 
\begin{equation*}
R^{jc}_{ib} = \sum_{x \in \{\pm 1\}} x \cdot \Big( \Pi^{(i,b),(j,c)}_{x,1} + \Pi^{(i,b),(j,c)}_{x,-1} \Big)
 \enspace .
\end{equation*}
Thus $R^{jc}_{ib}$ is the reflection measured to determine the response to $(i, b)$, marginalized over the response to $(j, c)$.  Similarly, for $i > j$, define $R^{jc}_{ib} = \sum_{x, y \in \{\pm 1\}} y \, \Pi^{(j,c),(i,b)}_{x,y}$.  Then for $i \neq j$, 
\begin{equation} \label{e:commutingmeasurements}
\big[ R^{jc}_{ib}, R^{ib}_{jc} \big] = 0
 \enspace .
\end{equation}

Since the optimal probability of winning a CHSH game is $\cos^2 \tfrac{\pi}{8}$, the maximum probability that the verifier can accept in sub-protocols $2$ and~$3$ is also $\cos^2 \tfrac{\pi}{8}$.  In particular, by a Markov inequality, this implies that the probability that Alice and Bob win in sub-protocol~$1$, conditioned on the verifier having chosen~$i \in [n]$, is at least $1 - 3 n \delta$.  Hence, by a straightforward calculation, 
\begin{equation} \label{e:paulistabilizedEPR}
\max\Big\{ 
\bignorm{ (X_i)_\Alice{A} (X_i)_\Bob{B} \ket \psi - \ket \psi }, 
\bignorm{ (Z_i)_\Alice{A} (Z_i)_\Bob{B} \ket \psi - \ket \psi }
\Big\} \leq 2 \sqrt{3 n \delta}
 \enspace .
\end{equation}
Again by a Markov inequality, for each of the $4 n (n-1)$ CHSH games embedded in sub-protocols $2$ and~$3$, the success probability is at least $\cos^2 \tfrac\pi8 - 6 n (n-1) \delta$.  Therefore \lemref{t:eprlemma} applies to each game.  (The condition on each observable's two eigenspaces having the same dimension is easily satisfied by at most doubling the dimension of the Hilbert space.)  

Consider the game in sub-protocol~$2$ specified by $i \neq j$ and $c \in \{0,1\}$.  The CHSH game measurement operators are $(R_0)_1 = (Z_i)_\Alice{A}$, $(R_1)_1 = (X_i)_\Alice{A}$ and $(R_b)_2 = (R^{jc}_{ib})_\Bob{B}$ for $b = 0, 1$.  From \lemref{t:eprlemma} we obtain that there exists a reflection $(\bar X_i)_\Alice{A}$, depending only on~$i$ (not $j$ or~$c$), such that $\{ Z_i, \bar X_i \} = 0$ and $\norm{(X_i - \bar X_i)_D \ket \psi} = O(n \sqrt \delta)$.  In particular, from Eq.~\eqnref{e:paulistabilizedEPR} this implies that $\bignorm{ (\bar X_i)_\Alice{A} (\bar X_i)_\Bob{B} \ket \psi - \ket \psi } = O(n \sqrt \delta)$.  

Recall that in the ideal CHSH game strategy from \secref{s:CHSHgame}, $(R_b)_2 = \tfrac{1}{\sqrt 2}(\sigma^z + (-1)^b \sigma^x)$.  Also $\ket{\EPRstate}$ is fixed by $\tfrac12 (\sigma^z + (-1)^b \sigma^x)^{\otimes 2}$.  From \lemref{t:eprlemma}, the actual measurement operators have an action on $\ket \psi$ that is close to that of these ideal operators, and $\ket \psi \approx \ket{\EPRstate} \otimes \ket{\hat \psi}$.  Combining these bounds using triangle inequalities gives 
\begin{equation} \label{e:hadamardstabilizedEPR}
\bignorm{ \tfrac{1}{\sqrt 2}(Z_i + (-1)^b \bar X_i)_\Alice{A} ( R^{jc}_{ib} )_\Bob{B} \ket \psi - \ket \psi } = O(n \sqrt \delta)
 \enspace .
\end{equation}

We will use the bounds in Eq.~\eqnref{e:hadamardstabilizedEPR} to show that $[P_i, Q_j]_\Alice{A} \ket \psi \approx 0$ for $P, Q \in \{ \bar X, Z \}$ and $i\neq j$.  For $i \neq j$, we compute that for $b, c \in \{0,1\}$, 
\begin{align*}
\tfrac{1}{\sqrt 2}(Z_i + (-1)^b \bar X_i)_\Alice{A} \tfrac{1}{\sqrt 2}(Z_j + (-1)^c \bar X_j)_\Alice{A} \ket \psi
&\approx 
\tfrac{1}{\sqrt 2}(Z_i + (-1)^b \bar X_i)_\Alice{A} (R^{ib}_{jc})_\Bob{B} \ket \psi \\
&\approx 
(R^{ib}_{jc} R^{jc}_{ib})_\Bob{B} \ket \psi \\
&= 
(R^{jc}_{ib} R^{ib}_{jc})_\Bob{B} \ket \psi \\
&\approx 
\tfrac{1}{\sqrt 2}(Z_j + (-1)^c \bar X_j)_\Alice{A} (R^{jc}_{ib})_\Bob{B} \ket \psi \\
&\approx 
\tfrac{1}{\sqrt 2}(Z_j + (-1)^c \bar X_j)_\Alice{A} \tfrac{1}{\sqrt 2}(Z_i + (-1)^b \bar X_i)_\Alice{A} \ket \psi
 \enspace ,
\end{align*}
where each approximation is up to error $O(n \sqrt \delta)$ in Euclidean distance, by Eq.~\eqnref{e:hadamardstabilizedEPR}, and the middle equality is by Eq.~\eqnref{e:commutingmeasurements}.  Thus we have $\bignorm{\big[Z_i + (-1)^b \bar X_i, Z_j + (-1)^c \bar X_j \big]_\Alice{A} \ket \psi} = O(n \sqrt \delta)$.  Taking sums and differences of these bounds, it follows that 
\begin{equation*}
\max\!\big\{ \norm{[Z_i, Z_j]_\Alice{A} \ket \psi}, \norm{[Z_i, \bar X_j]_\Alice{A} \ket \psi}, \norm{[\bar X_i, Z_j]_\Alice{A} \ket \psi}, \norm{[\bar X_i, \bar X_j]_\Alice{A} \ket \psi} \big\} = O(n \sqrt \delta)
 \enspace .
\end{equation*}
The bound $\max_{P, Q \in \{\bar X, Z\}} \norm{[P_i, Q_j]_\Bob{B} \ket \psi} = O(n \sqrt \delta)$ follows by symmetry.  
\end{proof}

\subsection{An attack: lower bound on robustness of the protocol} \label{s:robustEPRprotocol}

The soundness analysis from \thmref{t:nEPRsoundness}, \thmref{t:EPRstabilizersfrommath} and \corref{t:EPRpairsfromstabilizers} implies that any strategy with success probability $\omegaopt - 1/n^{O(1)}$ in our protocol must have local dimension at least $2^n$ for each player.  The following construction shows that the same conclusion cannot be extended to strategies having success probability below $\omegaopt - c \sqrt{(\log n) / n}$, for some constant~$c$.  

\begin{lemma}
For any $\epsilon > 0$ there exists a strategy for the players that is accepted with probability at least $\omegaopt - \epsilon$ in the protocol for testing $n$ qubits, but such that the players' Hilbert spaces $\H_\Alice{A}$ and $\H_\Bob{B}$ have dimension only $2^{O(\log n/\epsilon^2)}$.  
\end{lemma}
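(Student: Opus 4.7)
The plan is to construct an explicit strategy that uses $O(\log n + 1/\epsilon)$ qubits per player; since $\log n + 1/\epsilon \leq 2(\log n)/\epsilon^2$ for $\epsilon \leq 1$, this yields the claimed Hilbert space dimension $2^{O((\log n)/\epsilon^2)}$. The key idea is that on any single run the verifier queries at most two indices out of $[n]$, so Alice and Bob can use shared randomness to map every index to one of a small pool of physical qubits, and let a single physical EPR pair impersonate any logical qubit whose index is routed to it by the hash.

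Setup: Alice and Bob share $k = \Theta(1/\epsilon)$ EPR pairs dedicated to the CHSH measurements, plus $r = O(\log n)$ additional EPR pairs that are immediately measured in the computational basis to yield a common classical string. That string parametrizes a hash function $h : [n] \to [k]$ drawn from a pairwise-independent family, so that $\Pr[h(i) = h(j)] \leq 1/k$ for every fixed pair $i \neq j$. On a single-tuple query $(i,a)$, the player measures the $h(i)$-th EPR pair in the honest first-player CHSH basis ($\sigma^z$ if $a=0$, $\sigma^x$ if $a=1$). On a two-tuple query $((i,b),(j,c))$ with $h(i) \neq h(j)$, the player measures qubits $h(i)$ and $h(j)$ independently with the honest second-player observables ($H$ or $-\sigma^x H \sigma^x$). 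In the collision case $h(i) = h(j)$, the player measures qubit $h(i)$ using only the bit of the \emph{first} tuple, reports that outcome in the first slot, and outputs an independent uniform bit in the second slot.

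For the analysis, sub-protocol~$1$ is always passed: both players apply the same observable to the same EPR pair, and the two halves are perfectly correlated. For sub-protocols~$2$ and~$3$, I would condition on whether $h(i) = h(j)$. In the non-collision case the embedded CHSH game is played with the honest optimal strategy on a fresh EPR pair and is won with probability $\cos^2 \tfrac{\pi}{8}$. In the collision case the verifier's uniformly random choice of $(i,j)$ places Alice's index as likely in the first tuple as in the second; with probability $1/2$ the bit actually used by the two-question player is the CHSH-relevant bit and the game is still won with probability $\cos^2 \tfrac{\pi}{8}$, while with probability $1/2$ the reported response is the fresh uniform bit and the CHSH check succeeds with probability exactly $1/2$. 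Averaging and using $\Pr[\mathrm{collision}] \leq 1/k$, the CHSH acceptance probability drops by only $\Theta(1/k)$, so the overall acceptance probability is at least $\omegaopt - \Theta(1/k)$. Taking $k$ to be a sufficiently large constant times $1/\epsilon$ then gives the desired $\omegaopt - \epsilon$.

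I expect the only obstacle to be translating this informal description into a genuine projective strategy on $\H_\Alice{A} \otimes \H_\Bob{B}$: the classical shared randomness must be realized by the $r$ computational-basis measurements on designated EPR pairs, and the uniform bit produced in the collision case by an additional computational-basis measurement on one more ancillary qubit (absorbed into a $2^{k + O(\log n)}$-dimensional Hilbert space via Naimark dilation). Once this translation is in place, the dimension count and the $O(1/k)$ bound on the CHSH penalty combine immediately to give the lemma.
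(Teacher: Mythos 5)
Your construction is correct, and it proves the lemma by a genuinely different route than the paper. The paper's proof sketch invokes the dimension-reduction theorem of~\cite[Theorem 3.1]{ChaoReichardtSutherlandVidick16overlapping}, which supplies $n$ pairs of exactly anti-commuting reflections $(X_i, Z_i)$ on a space of dimension $2^{O(\log n/\epsilon^2)}$ whose cross-pair commutators satisfy $\norm{[P_i,Q_j]} \leq c\,\epsilon$; the players share a maximally entangled state of that dimension, answer single-tuple questions with the ideal CHSH observables built from $(X_i,Z_i)$, and answer two-tuple questions by measuring sequentially, with the commutator bound guaranteeing that the first measurement perturbs the outcome distribution of the second by at most $\epsilon$ in statistical distance. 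You instead keep a pool of $k=\Theta(1/\epsilon)$ genuine EPR pairs, route the $n$ logical indices to them by a shared pairwise-independent hash (its seed obtained from $O(\log n)$ jointly measured EPR pairs), and sacrifice only the rare collision events. Your accounting checks out: sub-protocol~$1$ is passed with certainty since both players apply identical observables to the same EPR pair; in the non-collision case the relevant game is the honest optimal CHSH strategy on a fresh pair; in a collision, since Alice's index is equally likely to sit in either tuple (independently of the hash), the graded answer is a genuine CHSH answer with probability $\tfrac12$ and a uniform bit otherwise, so the loss per game is $\big(\tfrac12\cos^2\tfrac\pi8 - \tfrac14\big)/k = O(1/k)$, and the Naimark/shared-randomness bookkeeping you flag at the end is routine. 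Comparing the two: yours is elementary and self-contained (no operator-theoretic construction needed), and it is quantitatively stronger---dimension $2^{O(\log n + 1/\epsilon)} = \mathrm{poly}(n)\cdot 2^{O(1/\epsilon)}$ rather than $n^{O(1/\epsilon^2)}$, which would even sharpen the paper's breakdown threshold of $\delta = \Omega(\sqrt{(\log n)/n})$ to roughly $\delta = \Omega(1/n)$ (this does not contradict \thmref{t:nEPRsoundness}, whose conclusion is vacuous unless $\delta \ll n^{-5}$). The paper's strategy, on the other hand, produces players who genuinely hold $n$ approximately commuting qubit observables, matching the structure the soundness analysis reasons about and tying the lower bound to the overlapping-qubits phenomenon of~\cite{ChaoReichardtSutherlandVidick16overlapping}, whereas your attack concentrates all error on discrete collision events. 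One small point to make explicit in a full write-up: a pairwise-independent family into $[k]$ may only give $\Pr[h(i)=h(j)] \leq 2/k$ when $k$ is not of a convenient arithmetic form, but this changes only the constant in $k = \Theta(1/\epsilon)$.
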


\begin{proof}[Proof sketch.]
We sketch the argument, which is based on a dimension-reduction construction from~\cite[Theorem 3.1]{ChaoReichardtSutherlandVidick16overlapping}.  The theorem states that there exist $n$ pairs of anti-commuting reflections $(X_i, Z_i)$ in a space of dimension $d = 2^{O(\log n/\epsilon^2)}$, satisfying $\norm{[P_i, Q_j]} \leq c \, \epsilon$ for $i \neq j$ and $P, Q$ either $X$ or~$Z$, where $c > 0$ is a constant that we can choose.  

Assume that Alice and Bob share a $d$-dimensional maximally entangled state and play the protocol in \secref{s:EPRpairprotocol} as follows:   
\begin{enumerate}
\item 
When asked to play a CHSH game on qubit $i \in [n]$, a player follows the ideal CHSH game strategy using the observables $Z_i, X_i$.  Thus the players pass the first sub-protocol with probability one.  
\item 
When asked to play CHSH games on qubits $i, j$, with $i < j$, a player sequentially measures using $H_i = (X_i+Z_i)/\sqrt{2}$ or $-X_i H X_i$, and then similarly for $j$.  Say for example that in sub-protocol~$2$ Bob is given the indices $i < j$.  If Alice is given index~$i$, then the players win the CHSH game with the optimal probability, $\cos^2 \tfrac\pi8$.  If Alice is given index~$j$, then Bob's first measurement using the observables associated with qubit~$i$ can slightly perturb the state on qubits~$j$.  However, provided~$c$ is chosen small enough, the commutation bounds imply that the distribution of outcomes is $\epsilon$-close in statistical distance from the case in which Bob measures qubit~$j$ first.  Thus the players succeed with probability at least $\cos^2 \tfrac\pi8 - \epsilon$ in the second and third sub-protocols.  \qedhere
\end{enumerate}
\end{proof}

\subsection*{Acknowledgements}

R.C., B.R.~and C.S.~supported by NSF grant CCF-1254119 and ARO grant W911NF-12-1-0541.  T.V.~supported by NSF CAREER grant CCF-1553477, AFOSR YIP award number FA9550-16-1-0495, and the IQIM, an NSF Physics Frontiers Center (NFS Grant PHY-1125565) with support of the Gordon and Betty Moore Foundation (GBMF-12500028).

\appendix

\section{Separation of $n$ qubits from any entangled state} \label{s:psiselftest}

Based on the results of~\cite{yangnavascues13robust,BampsPironio15sos} it is possible to extend the results from \secref{s:EPRselftest} to \mbox{separating}~$n$ partially overlapping qubits based on the use of an arbitrary two-qubit entangled state $\ket{\pst} = \cos\theta \ket{00} + \sin\theta \ket{11}$, for $\theta \in (0,\pi/2)$.  The arguments are very similar, replacing the CHSH game by a game based on a Bell inequality introduced in~\cite{yangnavascues13robust,BampsPironio15sos}.  In the following two sections we sketch the arguments for extending Theorems~\ref{t:EPRstabilizersfrommath} and~\ref{t:nEPRsoundness}, respectively, to this scenario.

\subsection{Separating $n$ qubits} \label{sec:s-multiple}

The following is an analogue of \thmref{t:EPRstabilizersfrommath}.  

\begin{theorem} \label{t:psifrommath}
Let $\ket \psi \in \H_\Alice{A} \otimes \H_\Bob{B}$, with $\norm{\ket \psi} = 1$.  Assume that for $j \in [n]$ we are given reflections 
\begin{equation*}
(X_j)_\Alice{A}, (Z_j)_\Alice{A} \in \L(\H_\Alice{A})
\quad \text{and} \quad
(X_j)_\Bob{B}, (Z_j)_\Bob{B} \in \L(\H_\Bob{B})
\end{equation*}
such that the following conditions hold: 
\begin{align} 
&\{(X_j)_D, (Z_j)_D\} = 0 
&\forall D \in \{\Alice{A}, \Bob{B}\}, \, \forall  j \in [n] \label{eq:psi-assumptions1} \\
&\bignorm{ (Z_j)_\Alice{A} \otimes (Z_j)_\Bob{B} \ket \psi - \ket \psi } \leq \epsilon 
&\forall j \in [n] \label{eq:psi-assumptions2} \\ 
\bigg\|\begin{split}
\sin\theta \, (X_j)_D \otimes (\Id + (Z_j)_{D'}) \ket \psi \\
- \cos\theta \, (\Id - (Z_j)_D)\otimes (X_j)_{D'}  \ket \psi 
\end{split}\bigg\| \leq \epsilon
&\forall D \neq D' \in \{\Alice{A}, \Bob{B}\}, \, \forall j \in [n] \label{eq:psi-assumptions3} \\ 
&\bignorm{ [P_i, Q_j]_D \ket \psi } \leq \epsilon 
&\forall D \in \{\Alice{A}, \Bob{B}\}, \, \forall P, Q \in \{X,Z\}, \, \forall i \neq j \in [n] \label{eq:psi-assumptions4}
\end{align}
Let 
\begin{equation*}
\ket{\psi'} = \ket \psi \otimes \ket{\pst}^{\otimes n}_\Alice{A'} \otimes \ket{\pst}^{\otimes n}_\Bob{B'} \otimes \ket{00}_{\Alice{A''}\Bob{B''}}^{\otimes n} \in \H_\Alice{A} \otimes (\C^2)^{\otimes 2n}_\Alice{A'} \otimes (\C^2)^{\otimes n}_\Alice{A''} \otimes \H_\Bob{B} \otimes (\C^2)^{\otimes 2n}_\Bob{B'} \otimes (\C^2)^{\otimes n}_\Bob{B''}
 \enspace .
\end{equation*}
Then there exist reflections $X_1', Z_1', \ldots, X_n', Z_n'$ on $\H_D \otimes (\C^2)^{\otimes 2n}_{D'}$, with $\{X_j', Z_j'\} = 0$ and for $P, Q \in \{X,Z\}$, $[P_i', Q_j'] = 0$ for $i \neq j$, $\bignorm{ \big( P_j' - P_j \otimes \identity_{D'} \big) \ket{\psi'} } = O(n \epsilon)$, and 
\begin{equation*}
\bignorm{ (P_j')_\Alice{AA'} \otimes (P_j')_\Bob{BB'} \ket{\psi'} - \ket{\psi'} } = O(n \epsilon)
 \enspace .
\end{equation*}
\end{theorem}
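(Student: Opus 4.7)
The plan is to follow the proof of \thmref{t:EPRstabilizersfrommath} almost verbatim, substituting the EPR ancilla with $\ket{\pst}$ and adapting the ``swap'' operator to account for the fact that $\ket{\pst}$ is not locally maximally mixed. The additional workspace register $(\C^2)^{\otimes n}_{D''}$ initialized to $\ket{0}^{\otimes n}$ exists precisely to enable a clean swap of Alice's virtual $j$-th qubit with one half of a fresh copy of $\ket{\pst}$, which would otherwise fail because the local marginals of $\ket{\pst}$ differ from $\tfrac12 \Id$ whenever $\theta \neq \pi/4$.

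For each $j \in [n]$ I would define a unitary $\S_j$ on $\H_\Alice{A} \otimes (\C^2)^{\otimes 2n}_\Alice{A'} \otimes (\C^2)^{\otimes n}_\Alice{A''}$ modeled on the Bamps--Pironio extraction isometry: use the projectors $\tfrac12(\Id \pm Z_j)$ on $\H_\Alice{A}$ together with conditional applications of $X_j$ and of standard Paulis on the $(2j{-}1)$st and $(2j)$th qubits of $\Alice{A'}$ and the $j$th qubit of $\Alice{A''}$, with the last workspace qubit absorbing the ``tilt'' so that the net action on $\ket\psi \otimes \ket{\pst} \otimes \ket{0}$ effectively moves Alice's virtual $j$th qubit into the $(2j{-}1)$st slot of $\Alice{A'}$. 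Define $\tilde\S_j$ symmetrically on Bob's side, and set $P_j' = (\S_1 \cdots \S_{j-1})(P_j \otimes \Id)(\S_{j-1} \cdots \S_1)$ for $P \in \{X,Z\}$. Anti-commutation $\{X_j', Z_j'\} = 0$ is immediate by conjugation of \eqnref{eq:psi-assumptions1}, and exact commutation $[P_i', Q_j'] = 0$ for $i \neq j$ holds because, after conjugation, $P_j'$ is supported on the $j$-indexed ancilla slots, which are disjoint from the $i$-indexed ones.

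The heart of the argument is to establish the analogues of \lemref{t:pullswapfromAlicetoBob} and \lemref{t:pauliswapcommuteondifferentqubits}. The pull lemma---that $\S_j$ on Alice's registers is $O(\epsilon)$-close, when applied to $\ket{\psi'}$, to a cross-side operator $\S_j'$ that uses Bob's observables $(X_j, Z_j)_\Bob{B}$ to act on Alice's ancillae---relies on the fact that $\ket{\pst}$ is characterized, up to local isometries on ancillas, by the exact versions of \eqnref{eq:psi-assumptions2} and \eqnref{eq:psi-assumptions3}; the approximate versions thus allow substituting Bob's $(X_j, Z_j)$ for Alice's at the cost of $O(\epsilon)$ per substitution, playing the role occupied by \claimref{t:EPRstatestabilized} in the EPR proof. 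The commutation lemma, $\bignorm{[\S_i, P_j \otimes \Id]\ket{\psi'}} = O(\epsilon)$ for $i \neq j$, follows directly from assumption \eqnref{eq:psi-assumptions4} and the explicit dependence of $\S_i$ on $X_i$ and $Z_i$ alone.

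The telescope step then proceeds identically to \thmref{t:EPRstabilizersfrommath}: each iteration moves one $\S_i$ factor from Alice to Bob via the pull lemma, commutes it past $P_j$ on Alice's side via the commutation lemma, and accumulates $O(\epsilon)$ error; after $O(n)$ steps we obtain $\bignorm{P_j'\ket{\psi'} - (P_j \otimes \Id)\ket{\psi'}} = O(n\epsilon)$, symmetrically for Bob, and the joint stabilization then follows by a triangle inequality combined with assumptions \eqnref{eq:psi-assumptions2}--\eqnref{eq:psi-assumptions3}. The main obstacle is the explicit construction of $\S_j$: the EPR-case Pauli-sum formula $\tfrac12(\Id\otimes\Id + X_j\otimes\sigma^x + Z_j\otimes\sigma^z + iX_jZ_j\otimes\sigma^y)$ works because both the virtual and fresh qubits are locally maximally mixed, but for $\ket{\pst}$ the swap must use the $\Alice{A''}$ workspace to carry---and then, on an uncomputation step, discard---the asymmetric weight, and one must verify that this more elaborate operator still satisfies the pull property based only on the approximate tilted relations \eqnref{eq:psi-assumptions2}--\eqnref{eq:psi-assumptions3}.
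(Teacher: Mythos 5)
Your proposal is correct and follows essentially the same route as the paper: its proof sketch likewise defines $\S_j = (\mathcal{U}_j)^\dagger_{\Alice{AA''}}\,\SWAP_{\Alice{A''A'}}\,(\mathcal{U}_j)_{\Alice{AA''}}$, where $(\mathcal{U}_j)_{\Alice{AA''_j}}$ is the Bamps--Pironio extraction circuit of controlled-$(X_j)_\Alice{A}$ and controlled-$(Z_j)_\Alice{A}$ gates interleaved with Hadamards on the $\Alice{A''_j}$ workspace qubit---precisely the explicit construction you flag as the remaining obstacle---and then establishes your two lemmas (the pull lemma from the characterization of $\ket\pst$ via \eqnref{eq:psi-assumptions2}--\eqnref{eq:psi-assumptions3}, and the commutation lemma from \eqnref{eq:psi-assumptions4} by checking that $\mathcal{U}_i$ and $\mathcal{U}_j$ approximately commute) before telescoping exactly as in \thmref{t:EPRstabilizersfrommath}. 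The only cosmetic difference is that the paper states the commutation lemma for the extracted operators $(P'_j)_{\Alice{AA''}} = \mathcal{U}_j^\dagger\big(\Id \otimes (P_j)_{\Alice{A''}}\big)\mathcal{U}_j$ rather than for $P_j \otimes \Id$, which agrees with your version up to $O(\epsilon)$ on $\ket{\psi'}$.
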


Note that conditions~\eqnref{eq:psi-assumptions2} and~\eqnref{eq:psi-assumptions3} replace the conditions used to certify EPR states in \thmref{t:EPRstabilizersfrommath}.  The following analog of \claimref{t:EPRstatestabilized} justifies these conditions by showing that they characterize the state $\ket\pst$: 

\begin{claim}[\cite{yangnavascues13robust,BampsPironio15sos}] \label{t:psistabilized}
For any $0 < \theta < 1$ there exists a constant~$c_\theta$ such that if $\ket \phi \in \C^2 \otimes \C^2 \otimes \H$ is a state satisfying 
\begin{equation*}
\max\big\{ \norm{\sigma^z_1 \sigma^z_2 \ket \phi - \ket \phi}, \norm{\sin\theta \, \sigma^x_1(I + \sigma^z_2) \ket \phi - \cos\theta \, \sigma^x_2 (I - \sigma^z_2) \ket \phi} \big\} \leq \delta
 \enspace ,
\end{equation*}
then there exists  a state $\ket{\phi'} \in \H$ such that 
\begin{equation*}
\bignorm{ \ket \phi - \ket{\pst} \otimes \ket{\phi'} } \leq c_\theta\,\delta \enspace .
\end{equation*}
\end{claim}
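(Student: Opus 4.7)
The plan is to expand $\ket\phi = \sum_{a,b \in \{0,1\}} \ket{ab}\ket{\phi_{ab}}$ with $\ket{\phi_{ab}} \in \H$, translate each of the two operator inequalities into norm bounds on these components, and then produce $\ket{\phi'}$ as a suitable renormalization of $\ket{\phi_{00}}$, mirroring the $\theta = \pi/4$ argument in \claimref{t:EPRstatestabilized}. Observe first that the state $\ket\pst$ is itself easily checked to saturate both inequalities with $\delta = 0$ (both sides in the second condition equal $2\sin\theta\cos\theta\ket{10}$), which fixes what we are aiming at.

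Since $\sigma^z_1\sigma^z_2\ket\phi - \ket\phi = -2\ket{01}\ket{\phi_{01}} - 2\ket{10}\ket{\phi_{10}}$, the first inequality gives $\|\phi_{01}\|^2 + \|\phi_{10}\|^2 \leq \delta^2/4$. A direct computation then shows $\sin\theta\,\sigma^x_1(I+\sigma^z_2)\ket\phi = 2\sin\theta(\ket{10}\ket{\phi_{00}} + \ket{00}\ket{\phi_{10}})$ and $\cos\theta\,\sigma^x_2(I-\sigma^z_2)\ket\phi = 2\cos\theta(\ket{00}\ket{\phi_{01}} + \ket{10}\ket{\phi_{11}})$; subtracting and using orthogonality of $\ket{00}$ and $\ket{10}$, the second inequality reduces to
\[
\|\sin\theta\,\phi_{00} - \cos\theta\,\phi_{11}\|^2 + \|\sin\theta\,\phi_{10} - \cos\theta\,\phi_{01}\|^2 \leq \delta^2/4,
\]
and in particular $\|\phi_{11} - \tan\theta\,\phi_{00}\| \leq \delta/(2\cos\theta)$. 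So the off-diagonal components are small and $\ket{\phi_{11}}$ is nearly proportional to $\ket{\phi_{00}}$ with the correct ratio $\tan\theta$.

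I would then set $\ket{\phi'} = \ket{\phi_{00}}/\|\phi_{00}\|$. Combining normalization $\sum_{ab}\|\phi_{ab}\|^2 = 1$ with the bounds above yields $\bigl|\,\|\phi_{00}\|^2 - \cos^2\theta\,\bigr| = O(\delta/\cos\theta)$, hence $\bigl|\,\|\phi_{00}\| - \cos\theta\,\bigr| = O(\delta/\cos^2\theta)$, provided $\|\phi_{00}\|$ is bounded away from $0$; if not, the stated bound is trivial by choosing $c_\theta$ large enough. Expanding $\ket\phi - \ket\pst\otimes\ket{\phi'}$ in the computational basis of the first two qubits produces four components: the $\ket{01}$ and $\ket{10}$ components are controlled by Step~1; the $\ket{00}$ component is $\ket{\phi_{00}}(1 - \cos\theta/\|\phi_{00}\|)$ and is controlled by the estimate on $\bigl|\,\|\phi_{00}\| - \cos\theta\,\bigr|$; and the $\ket{11}$ component is $\ket{\phi_{11}} - (\sin\theta/\|\phi_{00}\|)\ket{\phi_{00}}$, which is bounded using $\|\phi_{11} - \tan\theta\,\phi_{00}\| \leq \delta/(2\cos\theta)$ together with the same estimate on $\|\phi_{00}\|$. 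The triangle inequality then delivers $\|\ket\phi - \ket\pst\otimes\ket{\phi'}\| \leq c_\theta \delta$ for a constant $c_\theta$ depending only on $\sin\theta$ and $\cos\theta$, which may blow up as $\theta\to 0$ or $\pi/2$, consistent with the hypothesis $\theta\in(0,\pi/2)$.

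There is no essential obstacle: the claim is a short two-qubit calculation of the same flavor as \claimref{t:EPRstatestabilized}. The only care required is in tracking the $\theta$-dependent constants (the factors $1/\cos\theta$ and $1/\sin\theta$ that arise when inverting the approximate identity $\sin\theta\,\phi_{00} \approx \cos\theta\,\phi_{11}$), and in handling the degenerate regime where $\delta$ is not small compared to $\cos^2\theta$, which is absorbed into~$c_\theta$.
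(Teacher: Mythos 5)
Your proof is correct, but note that it does not mirror anything in the paper: the paper offers no proof of \claimref{t:psistabilized} at all, importing it wholesale from \cite{yangnavascues13robust,BampsPironio15sos}, where the statement arises as the state-extraction step in the robust self-testing analysis of the tilted Bell inequality $\mathcal{B}_\alpha$ (cf.~\thmref{thm:psi-test}), proved there by different means (sum-of-squares decompositions of the Bell operator). Your route is instead a direct computation in the computational basis, patterned on the paper's own proof of \claimref{t:EPRstatestabilized}, and the algebra checks out: the first hypothesis gives $\norm{\ket{\phi_{01}}}^2 + \norm{\ket{\phi_{10}}}^2 \leq \delta^2/4$; your expansions $\sin\theta\,\sigma^x_1(I+\sigma^z_2)\ket\phi = 2\sin\theta(\ket{10}\ket{\phi_{00}} + \ket{00}\ket{\phi_{10}})$ and $\cos\theta\,\sigma^x_2(I-\sigma^z_2)\ket\phi = 2\cos\theta(\ket{00}\ket{\phi_{01}} + \ket{10}\ket{\phi_{11}})$ are right, so orthogonality of the $\ket{00}$ and $\ket{10}$ sectors yields $\norm{\sin\theta\ket{\phi_{00}} - \cos\theta\ket{\phi_{11}}} \leq \delta/2$; normalization then pins $\norm{\ket{\phi_{00}}}$ to $\cos\theta$ up to $O(\delta)$ with a $\theta$-dependent constant, after which the four-sector triangle-inequality assembly with $\ket{\phi'} = \ket{\phi_{00}}/\norm{\ket{\phi_{00}}}$ is routine, and the degenerate regime where $\delta$ is not small compared to $\cos^2\theta$ (including $\ket{\phi_{00}} = 0$) is correctly absorbed into $c_\theta$. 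What your approach buys is self-containedness and an explicit, transparent $\theta$-dependence of $c_\theta$ (diverging as $\theta \to 0$ or $\pi/2$, consistent with the surely intended range $\theta \in (0,\pi/2)$ rather than the statement's literal ``$0 < \theta < 1$''), whereas the paper's citation buys brevity and consistency with the source of \thmref{thm:psi-test}. Two cosmetic points: your step $\bigl|\,\norm{\ket{\phi_{00}}} - \cos\theta\,\bigr| = O(\delta/\cos^2\theta)$ does not actually require $\norm{\ket{\phi_{00}}}$ bounded away from zero, since dividing $\bigl|\,\norm{\ket{\phi_{00}}}^2 - \cos^2\theta\,\bigr|$ by $\norm{\ket{\phi_{00}}} + \cos\theta \geq \cos\theta$ suffices (positivity is needed only to define $\ket{\phi'}$, which you handle); and a typo, the second displayed bound in your second paragraph should read $\norm{\sin\theta\,\phi_{00} - \cos\theta\,\phi_{11}} \leq \delta/2$ before dividing by $\cos\theta$, exactly as you then use it.
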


Based on the claim we can obtain the following corollary to \thmref{t:psifrommath}.  

\begin{corollary} \label{c:psifrommath}
Under the assumptions of \thmref{t:psifrommath}, there are unitaries $U_D : \H_D \otimes (\C^2)^{\otimes 2n}_{D'} \otimes (\C^2)^{\otimes n}_{D''} \rightarrow (\C^2)^{\otimes n}_D \otimes \hat \H_D$, for $D \in \{\Alice{A}, \Bob{B}\}$, and a state $\ket{\ancilla} \in \hat \H_\Alice{A} \otimes \hat \H_\Bob{B}$ such that 
\begin{align*}
\bignorm{ U_\Alice{A} \otimes U_\Bob{B} \ket{\psi'} - \ket{\pst}_{\Alice{A}\Bob{B}}^{\otimes n} \otimes \ket{\ancilla} } &= O(n^{3/2} \epsilon) \\
\norm{ (U X_j U^\dagger - \sigma^x_j \otimes \identity_{\hat \H})_D \ket{\pst}_{\Alice{A}\Bob{B}}^{\otimes n} \otimes \ket{\ancilla} } &= O(n^{3/2} \epsilon) \\
\norm{ (U Z_j U^\dagger - \sigma^z_j \otimes \identity_{\hat \H})_D \ket{\pst}_{\Alice{A}\Bob{B}}^{\otimes n} \otimes \ket{\ancilla} } &= O(n^{3/2} \epsilon)
 \enspace .
\end{align*}
\end{corollary}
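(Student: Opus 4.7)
The plan is to follow the blueprint of the proof of Corollary~\ref{t:EPRpairsfromstabilizers}, substituting the characterization of $\ket{\pst}$ from Claim~\ref{t:psistabilized} for that of the EPR state from Claim~\ref{t:EPRstatestabilized}.

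First, I would apply Theorem~\ref{t:psifrommath} to obtain reflections $X_j', Z_j'$ on $\H_D \otimes (\C^2)^{\otimes 2n}_{D'}$ that anti-commute exactly on the same index, commute exactly across distinct indices, and satisfy $\bignorm{(P_j' - P_j \otimes \identity_{D'})\ket{\psi'}} = O(n\epsilon)$. Next, by Theorem~2.3 of~\cite{ChaoReichardtSutherlandVidick16overlapping}, there is an isometry $V_D : \H_D \otimes (\C^2)^{\otimes 2n}_{D'} \to (\C^2)^{\otimes n}_D \otimes \tilde\H_D$ under which $X_j'$ and $Z_j'$ become $\sigma^x_j \otimes \identity_{\tilde\H_D}$ and $\sigma^z_j \otimes \identity_{\tilde\H_D}$. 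I would take $U_D$ to be $V_D$ followed by a regrouping that absorbs the passive $(\C^2)^{\otimes n}_{D''}$ factor into $\hat\H_D := \tilde\H_D \otimes (\C^2)^{\otimes n}_{D''}$.

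By the triangle inequality, the $\pst$-defining conditions~\eqnref{eq:psi-assumptions2} and~\eqnref{eq:psi-assumptions3} on $X_j, Z_j, \ket\psi$ carry over to analogous statements for the primed operators acting on $\ket{\psi'}$, at an additive cost of $O(n\epsilon)$. Applying $V_\Alice{A} \otimes V_\Bob{B}$, the primed operators become Paulis on the $j$-th qubit pair, so for each $j$ the state $(V_\Alice{A} \otimes V_\Bob{B})\ket{\psi'}$ satisfies the hypotheses of Claim~\ref{t:psistabilized} when viewed as a state on $\C^2_{\Alice{A},j} \otimes \C^2_{\Bob{B},j} \otimes \H_{\text{rest}}$ with $\delta = O(n\epsilon)$. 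Hence it is $O(n\epsilon)$-close to a product $\ket{\pst}_{\Alice{A},j\,\Bob{B},j} \otimes \ket{\phi_j}$. To combine the $n$ per-pair bounds into one global bound, I would invoke the straightforward generalization of Claim~\ref{t:manyEPRstatesstabilized} with the fixed two-qubit pure state $\ket{\pst}$ in place of the basis vector $\ket{1}$; the same argument goes through because it depends only on the orthogonal projector onto the target vector. This yields closeness $O(n^{3/2}\epsilon)$ from $\ket{\pst}^{\otimes n}_{\Alice{A}\Bob{B}} \otimes \ket\ancilla$, where $\ket\ancilla$ absorbs the residuals in $\tilde\H_\Alice{A} \otimes \tilde\H_\Bob{B}$ together with the unaltered $\ket{00}^{\otimes n}_{\Alice{A''}\Bob{B''}}$ factor. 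The bounds on $U X_j U^\dagger$ and $U Z_j U^\dagger$ then follow by another triangle inequality from $\bignorm{(P_j - P_j')\ket{\psi'}} = O(n\epsilon)$ and the exact identification $V_D P_j' V_D^\dagger = \sigma^p_j \otimes \identity$.

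The main obstacle I expect is the correct application of the $\ket{\pst}$-characterization: Claim~\ref{t:psistabilized} requires a mixed $X$-and-$Z$ condition rather than the $X\otimes X$ and $Z\otimes Z$ stabilization that drove the EPR case, so one must carefully check that the $\theta$-specific relation~\eqnref{eq:psi-assumptions3} transfers from $X_j, Z_j, \ket\psi$ to $X_j', Z_j', \ket{\psi'}$ and that the resulting state-level condition on $(V_\Alice{A} \otimes V_\Bob{B})\ket{\psi'}$ falls squarely within the scope of Claim~\ref{t:psistabilized}. Once this bookkeeping is handled, the argument is essentially mechanical, and the only quantitative change from the EPR case is the appearance of a $\theta$-dependent constant $c_\theta$ in the final error bound.
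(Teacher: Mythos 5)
Your proposal is correct and follows essentially the same route as the paper, which proves \corref{c:psifrommath} only implicitly by reusing the blueprint of \corref{t:EPRpairsfromstabilizers}: apply the isometry of Theorem~2.3 of~\cite{ChaoReichardtSutherlandVidick16overlapping} to the reflections from \thmref{t:psifrommath}, certify each qubit pair via \claimref{t:psistabilized}, and combine the $n$ per-pair bounds through \claimref{t:manyEPRstatesstabilized} to obtain the $O(n^{3/2}\epsilon)$ guarantee. Your explicit check that the $\theta$-dependent relation~\eqnref{eq:psi-assumptions3} transfers to the primed operators on $\ket{\psi'}$ (rather than relying on $X' \otimes X'$ stabilization, which does not characterize $\ket{\pst}$ for $\theta \neq \pi/4$) is precisely the bookkeeping the paper leaves implicit, and it is handled correctly.
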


\begin{proof}[Proof sketch of \thmref{t:psifrommath}]
The proof follows closely the proof of \thmref{t:EPRstabilizersfrommath}.  The main result needed is the existence of appropriate SWAP operators $\mathcal{S}_j$ satisfying the properties of Lemmas~\ref{t:pullswapfromAlicetoBob} and~\ref{t:pauliswapcommuteondifferentqubits}.  

\def\control #1#2{{\text{$\mathrm{CTL}_{#1}$-${#2}$}}}

For $j \in [n]$ let
\begin{equation} \label{eq:t-def-u}
(\mathcal{U}_j)_{\Alice{A}\Alice{A''_j}} = (H_j)_{\Alice{A''_j}} (\control{\Alice{A''_j}}{(Z_j)_\Alice{A}}) (H_j)_\Alice{A''_j} (\control{\Alice{A''_j}}{(X_j)_\Alice{A}})
 \enspace ,
\end{equation}
where $(H_j)_\Alice{A''}$ is a Hadamard on the ancilla qubit and $\control{\Alice{A''}}{(Z_j)_\Alice{A}}$ (resp.~$\control{\Alice{A''}}{(X_j)_\Alice{A}}$) is $(Z_j)_\Alice{A}$ (resp.~$(X_j)_\Alice{A}$), controlled on the qubit in $\Alice{A''}$.  Define $\mathcal{U}_{\Bob{B}\Bob{B''}}$ similarly.  As shown in~\cite{yangnavascues13robust,BampsPironio15sos}, it follows from~\eqnref{eq:psi-assumptions2} and~\eqnref{eq:psi-assumptions3} that for $D \in \{\Alice{A}, \Bob{B}\}$ and $P \in \{X, Z\}$, if $(P'_j)_{DD_j''} = (\mathcal{U}_j^\dagger)_{DD''_j} (\Id_{D} \otimes (P_j)_{D''_j} )(\mathcal{U}_j)_{DD''_j}$ then 
\begin{equation} \label{eq:t-u-pst}
 \big\| (\mathcal{U}_j)_{\Alice{AA''_j}} \otimes (\mathcal{U}_j)_{\Bob{BB''_j}} \ket{\psi'} - \ket{\pst}_{\Alice{A''_j}\Bob{B''_j}} \otimes \ket{\mathrm{junk}}_{\Alice{A}\Bob{B}} \otimes \ket{\pst}_{\Alice{A'}}^{\otimes n}\otimes \ket{\pst}_{\Bob{B'}}^{\otimes n} \big\| = O(\epsilon)
\end{equation}
for some state $\ket{\mathrm{junk}}_{\Alice{A}\Bob{B}}$, and 
\begin{equation*}
\big\| ( (P_j)_{D} -  (P'_j)_{DD''})\ket{\psi'} \big\| = O(\epsilon)
 \enspace .
\end{equation*}
We now define $\mathcal{S}_j$ as 
\begin{equation*}
\mathcal{S}_j = (\mathcal{U}_j)_{\Alice{AA''}}^\dagger \SWAP_\Alice{A''A'} (\mathcal{U}_j)_\Alice{AA''}
 \enspace ,
\end{equation*}
and similarly define $\mathcal{S}_j'$, using $(\mathcal{U}_j)_\Bob{BB''}$ and swapping $\Bob{B''}$ with $\Alice{A'}$.  
The following lemma establishes the analog of Lemmas~\ref{t:pullswapfromAlicetoBob} and~\ref{t:pauliswapcommuteondifferentqubits}.  

\begin{lemma} \label{lem:psiswap}
For $i \in [n]$, 
\begin{equation} \label{eq:swap-switch}
\norm{ (\S_{i} - \S'_i) \ket{\psi'} } = O(\epsilon)
 \enspace .
\end{equation}
For $i, j \in [n]$, $i \neq j$, and $P \in \{X,Z\}$, 
\begin{equation} \label{eq:swap-t-com}
\bignorm{ [\S_i,(P'_j)_\Alice{AA''} \otimes \Id_\Alice{A'}] \ket{\psi'} } = O(\epsilon)
 \enspace .  
\end{equation}
\end{lemma}

\begin{proof}
For~\eqnref{eq:swap-switch}, we use~\eqnref{eq:t-u-pst}.  For~\eqnref{eq:swap-t-com}, it suffices to check that for $i \neq j$ the unitaries $(\mathcal{U}_i)_\Alice{AA''}$ and $(\mathcal{U}_j)_\Alice{AA''}$ approximately commute.  This follows from the definition~\eqnref{eq:t-def-u} and the approximate commutation conditions~\eqnref{eq:psi-assumptions4}.  
\end{proof}

Once \lemref{lem:psiswap} has been established, the proof of the theorem follows exactly the same steps as the proof of \thmref{t:EPRstabilizersfrommath}.  
\end{proof}

\subsection{The testing protocol} \label{s:sbasic}

In order to experimentally verify that the conditions of \thmref{t:psifrommath} are satisfied, a very similar protocol to the one described in \secref{s:EPRpairprotocol} can be used, except the CHSH tests should be replaced by the appropriate self-test for the state $\ket \pst$.  Such a test is developed in~\cite{yangnavascues13robust,BampsPironio15sos}, based on a Bell inequality with two inputs and two outputs per site.  

\begin{theorem}[\cite{yangnavascues13robust,BampsPironio15sos}] \label{thm:psi-test}
For $0 < \alpha < 2$ let $\mathcal{B}_\alpha = \alpha A_0 + A_0 (B_0 + B_1) + A_1 (B_0 - B_1)$, and let $\theta$ be such that $\tan \theta = \sqrt{(4 - \alpha^2)/(2 \alpha^2)}$.  Then the optimal violation of $\mathcal{B}_\alpha$ is $b_\alpha = \sqrt{8+2\alpha^2}$, and this violation can be achieved using state $\ket \pst$.  

Furthermore, suppose arbitrary observables $A_0,A_1$ and $B_0,B_1$ on $\H_\Alice{A}$ and $\H_\Bob{B}$ respectively lead to a violation of $b_\alpha - \epsilon$ when applied to a state $\ket{\psi}_{\Alice{A}\Bob{B}}$.  Assume that $A_0, A_1$ and $B_0, B_1$ each have both $\pm 1$ eigenspaces of the same dimension. Then there exists observables $P_\Alice{A}$ and $P_\Bob{B}$, for $P \in \{X, Z\}$, such that 
\begin{align}
&\bignorm{ (Z_\Alice{A} \otimes \Id_\Bob{B} - \Id_\Alice{A} \otimes Z_\Bob{B}) \ket \psi } =  O(\epsilon^{1/2}) \label{eq:s-z-close} \\
&\bignorm{ \sin\theta \, (X_\Alice{A} \otimes (\Id_\Bob{B} + Z_\Bob{B} ) \ket \psi - \cos\theta \, (\identity_\Alice{A} -  Z_\Alice{A}) \otimes X_\Bob{B} \ket \psi } = O(\epsilon^{1/2}) \label{eq:s-x-close-1}
 \enspace ,
\end{align}
where the $O(\cdot)$ notation hides factors depending on $\cos^{-1}\theta$ and $\sin^{-1}\theta$. 
Moreover, we can take $X_\Alice{A} = A_0$ and $Z_\Alice{A} = A_1$.  
\end{theorem}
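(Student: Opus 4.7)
The plan is to prove both the optimal violation and the robust self-testing statement via a sum-of-squares (SOS) decomposition of the Bell operator, following the template of the CHSH analysis (Lemma~\ref{t:eprlemma}) but adapted to the tilted family $\mathcal{B}_\alpha$.

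First, to determine the optimal quantum value, I would invoke Jordan's lemma on the pair of reflections $A_0,A_1$, and independently on $B_0,B_1$, to block-diagonalize each player's observables into a direct sum of qubit blocks. Within each $2 \times 2$ block the analysis reduces to a standard variational problem over three angles and a choice of two-qubit state, which is solved by calculus. Maximizing over blocks and convex mixtures gives $b_\alpha = \sqrt{8+2\alpha^2}$, attained precisely when the state restricted to the block is $\ket{\pst}$ with $\tan\theta = \sqrt{(4-\alpha^2)/(2\alpha^2)}$, Alice's observables are $\sigma^z$ and $\sigma^x$, and Bob's observables are the specific rotations of Pauli operators arising from the first-order optimality conditions. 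The last claim of the first paragraph—that $\ket\pst$ achieves $b_\alpha$—then follows by plugging this ideal strategy into $\mathcal{B}_\alpha$.

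Second, and this is the heart of the argument, I would establish an explicit operator identity of the form
\begin{equation*}
b_\alpha \, \identity - \mathcal{B}_\alpha = \sum_i \lambda_i \, Q_i^{\dagger} Q_i,
\end{equation*}
with $\lambda_i > 0$, where each $Q_i$ is a low-degree polynomial in $A_0,A_1,B_0,B_1$ chosen so that the vanishing of $Q_i \ket\psi$ on the ideal strategy encodes the desired self-testing relations. A natural set of $Q_i$ includes operators of the form $\bigl(A_0 \otimes \identity - \identity \otimes F_0\bigr)$, $\bigl(A_1 \otimes \identity - \identity \otimes F_1\bigr)$, and $\bigl( \sin\theta\,A_0(\identity + F_1) - \cos\theta\,(\identity - A_1) F_0 \bigr)$, where $F_0,F_1$ are the normalized linear combinations of $B_0, B_1$ that play the role of $Z_\Bob{B}$ and $X_\Bob{B}$, obtained via the polar decomposition of $B_0 + B_1$ and $B_0 - B_1$. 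Verifying this identity reduces to a finite algebraic calculation using only $A_0^2 = A_1^2 = B_0^2 = B_1^2 = \identity$; I would carry this out as in~\cite{yangnavascues13robust, BampsPironio15sos}. The main obstacle is finding the correct weights $\lambda_i$ and showing that the residue indeed simplifies to zero; this is where the tilted structure of $\mathcal{B}_\alpha$ plays a delicate role and where the matching parameters $\theta$ and $\alpha$ enter.

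Third, granted the SOS identity, the robustness follows immediately. The assumption $\langle \psi \vert \mathcal{B}_\alpha \vert \psi \rangle \geq b_\alpha - \epsilon$ together with positivity yields $\sum_i \lambda_i \|Q_i \ket\psi\|^2 \leq \epsilon$, hence $\|Q_i \ket\psi\| = O(\sqrt\epsilon)$ for each $i$. Setting $X_\Alice{A} = A_0$, $Z_\Alice{A} = A_1$, and letting $X_\Bob{B}, Z_\Bob{B}$ be the polar-decomposition observables constructed from $F_0,F_1$ above, the bounds on the chosen $Q_i$ translate directly, via the triangle inequality, into~\eqnref{eq:s-z-close} and~\eqnref{eq:s-x-close-1}. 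The assumption that each observable's $\pm 1$ eigenspaces have equal dimension is used to guarantee that the polar decompositions yield genuine reflections rather than partial isometries, and to allow the $\pm1$ spectrum of $B_0 \pm B_1$ to be suitably extended without changing the action on $\ket\psi$ beyond $O(\sqrt\epsilon)$. The constants in the $O(\cdot)$ expressions depend on $\sin^{-1}\theta$ and $\cos^{-1}\theta$ through the normalizations in the polar decomposition, matching the claim.
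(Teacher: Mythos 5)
First, a caveat about the comparison: this paper contains no proof of \thmref{thm:psi-test} at all---the theorem is imported, with attribution, from \cite{yangnavascues13robust,BampsPironio15sos}---so your proposal can only be measured against those cited works, and it does follow their route. The heart of the argument there is exactly what you describe: an explicit sum-of-squares decomposition of $b_\alpha\identity-\mathcal{B}_\alpha$ into terms $\lambda_i Q_i^\dagger Q_i$, the chain $\sum_i\lambda_i\norm{Q_i\ket\psi}^2\le\epsilon$ giving $\norm{Q_i\ket\psi}=O(\sqrt\epsilon)$, and Bob's reflections extracted from the polar decompositions of $B_0\pm B_1$, with the equal-dimension eigenspace hypothesis used to complete the polar parts to genuine reflections and with the normalizations (namely $2\cos\mu$ and $2\sin\mu$ where $\tan\mu=\sin2\theta$) producing the $\sin^{-1}\theta,\cos^{-1}\theta$ factors hidden in the $O(\cdot)$. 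Your first paragraph is sound but partly redundant: once the SOS identity is in hand it already certifies $\mathcal{B}_\alpha\preceq b_\alpha\identity$, so Jordan's lemma is needed only to locate the optimizers, and attainability is anyway witnessed by the explicit qubit strategy.

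There is, however, one concrete error in your candidate polynomials. Any $Q_i$ appearing in a valid decomposition $b_\alpha\identity-\mathcal{B}_\alpha=\sum_i\lambda_i Q_i^\dagger Q_i$ must annihilate the state of the optimal strategy, since that strategy makes the left-hand side's expectation zero while each term on the right is nonnegative. But an $X$--$X$ consistency term fails this test: $(\sigma^x\otimes\identity)\ket{\pst}=\cos\theta\ket{10}+\sin\theta\ket{01}$, whereas $(\identity\otimes\sigma^x)\ket{\pst}=\cos\theta\ket{01}+\sin\theta\ket{10}$, and these agree only at $\theta=\pi/4$. This is precisely why the theorem asserts the tilted relation \eqnref{eq:s-x-close-1} instead of an $X$--$X$ analogue of \eqnref{eq:s-z-close}. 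Under your own assignment ($F_0\sim Z_\Bob{B}$ from $B_0+B_1$, $F_1\sim X_\Bob{B}$ from $B_0-B_1$, so $A_0\sim Z_\Alice{A}$, $A_1\sim X_\Alice{A}$), the offending candidate is $A_1\otimes\identity-\identity\otimes F_1$, and the indices in your third polynomial are also scrambled: to match \eqnref{eq:s-x-close-1} it should read $\sin\theta\,A_1(\identity+F_0)-\cos\theta\,(\identity-A_0)F_1$, which does vanish on the ideal strategy. If you attempted the ``finite algebraic calculation'' with your list as written, no positive weights $\lambda_i$ could make the identity hold. Since your final derivation of \eqnref{eq:s-z-close} and \eqnref{eq:s-x-close-1} only invokes the $Z$-consistency term and the tilted term, dropping and correcting the bad terms repairs the argument without further changes; the exact polynomials and weights are in \cite{BampsPironio15sos}. (Incidentally, your pairing $Z_\Alice{A}=A_0$, $X_\Alice{A}=A_1$ matches the cited works---the tilt $\alpha A_0$ forces $A_0$ to play the $\sigma^z$ role---whereas the theorem's closing sentence states the opposite assignment; that discrepancy appears to lie in the statement as quoted, not in your proposal.)
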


Based on the Bell inequality $\mathcal{B}_\alpha$ it is possible to design a simple game between players ``Alice'' and ``Bob'' such that the player's maximum success probability in the game is directly related to the expectation value in $\mathcal{B}_\alpha$ that is induced by their strategy. 

We can then use exactly the same protocol as described in \secref{s:EPRpairprotocol}, with the CHSH game replaced by the game based on $\mathcal{B}_\alpha$.  The proof that the appropriate relations, as required in \thmref{t:psifrommath}, are satisfied by any strategy that is accepted in the protocol with probability close to the optimum then follows closely the proof of \thmref{t:nEPRsoundness} and we only sketch the argument.  

Observables $(P_i)_D$ are obtained directly from \thmref{thm:psi-test}.  Conditions~\eqnref{eq:psi-assumptions2} and~\eqnref{eq:psi-assumptions3} then \mbox{follow} from~\thmref{thm:psi-test}.  The commutation conditions~\eqnref{eq:psi-assumptions4} are proven exactly as in the proof of \mbox{\thmref{t:nEPRsoundness}}.  

The anti-commutation relations require a little more work.  From observables $X_D,Z_D$ define $\hat{Z}_\Alice{A} = Z_\Alice{A}$, $\hat{Z}_\Bob{B} = Z_\Bob{B}$, and 
\begin{equation*}
\hat{X}_\Alice{A} = \frac{\cos\theta}{2 \sin\theta} X_\Alice{A} (\identity_\Alice{A} - Z_\Alice{A}) - \frac{\sin\theta}{2 \cos\theta} X_\Alice{A} (\identity_\Alice{A} - Z_\Alice{A}) , \quad \hat{X}_\Bob{B} = \frac{\cos\theta}{2 \sin\theta} X_\Bob{B} (\identity_\Bob{B} - Z_\Bob{B}) - \frac{\sin\theta}{2 \cos\theta} X_\Bob{B} (\identity_\Bob{B} - Z_\Bob{B}) 
  \enspace .
\end{equation*}
It follows from~\eqnref{eq:s-z-close} and~\eqnref{eq:s-x-close-1} that for $T \in \{X, Z\}$, $\hat T$ is an operator of norm $O(\tan \theta + \tan^{-1} \theta)$ such that 
\begin{equation}
\max\Big\{ \bignorm{ (T_\Alice{A} \otimes \identity_\Bob{B} - \identity_\Alice{A} \otimes \hat{T}_\Bob{B}) \ket \psi },\, \bignorm{ (\identity_\Alice{A} \otimes T_\Bob{B} - \hat{T}_\Alice{A} \otimes \identity_\Bob{B}) \ket \psi } \Big\} = O\big( (\sin^{-1}\theta + \cos^{-1}\theta) \epsilon \big)
 \enspace .
\end{equation}
Using this it is then easy to verify from the same equations that approximate anti-commutation relations 
\begin{equation}
\max\Big\{ \bignorm{ \{ X_\Alice{A}, Z_\Alice{A}\} \otimes \identity_\Bob{B} \ket \psi },\, \bignorm{ \identity_\Alice{A} \otimes \{X_\Bob{B}, Z_\Bob{B}\} \ket \psi } \Big\} = O\big(( \sin^{-1} \theta + \cos^{-1} \theta) \epsilon \big)
\end{equation}
hold, as desired.  It is then not hard to turn such pairwise approximate anti-commutation into exact anti-commutation; see, e.g.,~\cite[Lemma~7]{OstrevVidick16entanglement}.

\bibliographystyle{alpha-eprint}
\bibliography{q}

\newcommand{\etalchar}[1]{$^{#1}$}
\begin{thebibliography}{SMSC{\etalchar{+}}15}
\expandafter\ifx\csname urlprefix\endcsname\relax\def\urlprefix{URL }\fi
\providecommand{\arxiv}[2][]{\href{http://arxiv.org/pdf/#2}{\texttt{arXiv:#2}}}
\providecommand{\doi}[2][]{\href{http://dx.doi.org/#2}{\texttt{doi:#2}}}

\bibitem[AB17]{RotemBancal17didistill}
Rotem Arnon-Friedman and Jean-Daniel Bancal.
\newblock Device-independent certification of one-shot distillable
  entanglement.
\newblock 2017, \href{http://www.arxiv.org/abs/1712.09369}{{\tt arXiv:1712.09369
  [quant-ph]}}.

\bibitem[AY17]{RotemYuen17highdim}
Rotem Arnon-Friedman and Henry Yuen.
\newblock Noise-tolerant testing of high \mbox{entanglement} of formation.
\newblock \href{http://dx.doi.org/10.4230/LIPIcs.ICALP.2018.11}{In {\em Proc. 45th ICALP}, pages 11:1--11:12, 2018}, \href{http://www.arxiv.org/abs/1712.09368}{{\tt arXiv:1712.09368 [quant-ph]}}.

\bibitem[BLR93]{BLRtest}
Manuel Blum, Michael Luby, and Ronitt Rubinfeld.
\newblock Self-testing/correcting with \mbox{applications} to numerical problems.
\newblock \href{https://doi.org/10.1016/0022-0000(93)90044-W}{{\em J. Comput. Syst.}, 47(3):549--595, 1993}.

\bibitem[BP15]{BampsPironio15sos}
C{\'e}dric Bamps and Stefano Pironio.
\newblock Sum-of-squares decompositions for a family of
  {C}lauser-{H}orne-{S}himony-{H}olt-like inequalities and their application to
  self-testing.
\newblock \href{http://dx.doi.org/10.1103/physreva.91.052111}{{\em Phys.
  Rev. A}, 91(5):052111, 2015}, \href{http://www.arxiv.org/abs/1504.06960}{{\tt arXiv:1504.06960 [quant-ph]}}.

\bibitem[CHSH69]{ClauserHorneShimonyHolt69chshgame}
John~F. Clauser, Michael~A. Horne, Abner Shimony, and Richard~A. Holt.
\newblock Proposed experiment to test local hidden-variable theories.
\newblock \href{http://dx.doi.org/10.1103/PhysRevLett.23.880}{{\em Phys. Rev.
  Lett.}, 23:880--884, 1969}.

\bibitem[CN16]{CoudronNatarajan16rigidmagicsquare}
Matthew Coudron and Anand Natarajan.
\newblock The parallel-repeated magic square game is rigid.
\newblock 2016, \href{http://www.arxiv.org/abs/1609.06306}{{\tt arXiv:1609.06306
  [quant-ph]}}.

\bibitem[Col17]{Coladangelo16parallelchsh}
Andrea~W. Coladangelo.
\newblock Parallel self-testing of (tilted) {EPR} pairs via copies of (tilted)
  {CHSH} and the magic square game.
\newblock {\em Quantum Inf. Comput.}, 17(9{\&}10):831--865,
  2017, \href{http://www.arxiv.org/abs/1609.03687}{{\tt arXiv:1609.03687
  [quant-ph]}}.

\bibitem[CRSV17]{ChaoReichardtSutherlandVidick16overlapping}
Rui Chao, Ben~W. Reichardt, Chris Sutherland, and Thomas Vidick.
\newblock Overlapping qubits.
\newblock \href{http://dx.doi.org/10.4230/LIPIcs.ITCS.2017.48}{In {\em Proc.
  8th Innovations in Theoretical Computer Science Conference (ITCS)},
  volume~67, pages 48:1--48:21, 2017}, \href{http://arxiv.org/abs/1701.01062}{{\tt arXiv:1701.01062
  [quant-ph]}}.

\bibitem[CS17]{ColadangeloStark17lineargame}
Andrea Coladangelo and Jalex Stark.
\newblock Robust self-testing for linear constraint system games.
\newblock 2017, \href{http://www.arxiv.org/abs/1709.09267}{{\tt arXiv:1709.09267
  [quant-ph]}}.

\bibitem[G{\etalchar{+}}15]{Giustina15loopholeCHSH}
Marissa Giustina \emph{et al.}
%, Marijn A.~M. Versteegh, S{\"o}ren Wengerowsky, Johannes
%  Handsteiner, Armin Hochrainer, Kevin Phelan, Fabian Steinlechner, Johannes
%  Kofler, Jan-{\AA}ke Larsson, Carlos Abell{\'a}n, Waldimar Amaya, Valerio
%  Pruneri, Morgan~W. Mitchell, J{\"o}rn Beyer, Thomas Gerrits, Adriana~E. Lita,
%  Lynden~K. Shalm, Sae~Woo Nam, Thomas Scheidl, Rupert Ursin, Bernhard
%  Wittmann, and Anton Zeilinger.
\newblock Significant-loophole-free test of {B}ell's theorem with entangled
  photons.
\newblock \href{http://dx.doi.org/10.1103/PhysRevLett.115.250401}{{\em Phys.
  Rev. Lett.}, 115:250401, 2015},
  \href{http://www.arxiv.org/abs/1511.03190}{{\tt arXiv:1511.03190 [quant-ph]}}.

\bibitem[H{\etalchar{+}}15]{Hensen15loopholefreeCHSH}
B.~Hensen \emph{et al.}
%, H.~Bernien, A.~E. Dr{\'e}au, A.~Reiserer, N.~Kalb, M.~S. Blok,
%  J.~Ruitenberg, R.~F.~L. Vermeulen, R.~N. Schouten, C.~Abell{\'a}n, W.~Amaya,
%  V.~Pruneri, M.~W. Mitchell, M.~Markham, D.~J. Twitchen, D.~Elkouss,
%  S.~Wehner, T.~H. Taminiau, and R.~Hanson.
\newblock Loophole-free {B}ell inequality violation using electron spins
  separated by 1.3 kilometres.
\newblock \href{http://dx.doi.org/10.1038/nature15759}{{\em Nature},
  526:682--686, 2015}, \href{http://www.arxiv.org/abs/1508.05949}{{\tt
  arXiv:1508.05949 [quant-ph]}}.

\bibitem[H{\etalchar{+}}16]{Hensen16loopholetwo}
B.~Hensen \emph{et al.}
%, N.~Kalb, M.~S. Blok, A.~E. Dr{\'e}au, A.~Reiserer, R.~F.~L.
%  Vermeulen, R.~N. Schouten, M.~Markham, D.~J. Twitchen, K.~Goodenough,
%  D.~Elkouss, S.~Wehner, T.~H. Taminiau, and R.~Hanson.
\newblock Loophole-free {B}ell test using electron spins in diamond: second
  experiment and additional analysis.
\newblock \href{http://dx.doi.org/10.1038/srep30289}{{\em Scientific Reports},
  6:30289, 2016}, \href{http://www.arxiv.org/abs/1603.05705}{{\tt
  arXiv:1603.05705 [quant-ph]}}.

\bibitem[IKM09]{ItoKobayashiMatsumoto09oracularization}
Tsuyoshi Ito, Hirotada Kobayashi, and Keiji Matsumoto.
\newblock Oracularization and two-prover one-round interactive proofs against
  nonlocal strategies.
\newblock \href{http://dx.doi.org/10.1109/CCC.2009.22}{In {\em Proc. 24th IEEE
  Conf. on Computational Complexity (CCC)}, pages 217--228. IEEE Computer
  Society, 2009}, \href{http://www.arxiv.org/abs/0810.0693}{{\tt
  arXiv:0810.0693 [quant-ph]}}.

\bibitem[KKM{\etalchar{+}}11]{KempeKobayashiMatsumotoTonerVidick07qmip}
Julia Kempe, Hirotada Kobayashi, Keiji Matsumoto, Ben Toner, and Thomas Vidick.
\newblock Entangled games are hard to approximate.
\newblock \href{http://dx.doi.org/10.1137/090751293}{{\em J. ACM},
  40(3):848--877, 2011}, \href{http://www.arxiv.org/abs/0704.2903}{{\tt
  arXiv:0704.2903 [quant-ph]}}.
\newblock Earlier version in FOCS'08.

\bibitem[McK16]{McKague15parallelselftesting}
Matthew McKague.
\newblock Self-testing in parallel.
\newblock \href{http://dx.doi.org/10.1088/1367-2630/18/4/045013}{{\em New J.
  Phys.}, 18:045013, 2016}, \href{http://www.arxiv.org/abs/1511.04194}{{\tt
  arXiv:1511.04194 [quant-ph]}}.

\bibitem[Mer90]{Mermin90magicsquare}
N.~David Mermin.
\newblock Simple unified form for the major no-hidden-variables theorems.
\newblock \href{http://dx.doi.org/10.1103/PhysRevLett.65.3373}{{\em Phys. Rev.
  Lett.}, 65:3373, 1990}.

\bibitem[MMMO06]{MagniezMayersMoscaOllivier05selftest}
Fr{\'e}d{\'e}ric Magniez, Dominic Mayers, Michele Mosca, and Harold Ollivier.
\newblock Self-testing of quantum circuits.
\newblock \href{http://dx.doi.org/10.1007/11786986\_8}{In {\em Proc. 33rd
  ICALP}, pages 72--83, 2006},
  \href{http://www.arxiv.org/abs/quant-ph/0512111}{{\tt
  arXiv:quant-ph/0512111}}.

\bibitem[MYS12]{McKagueYangScarani12chshrigidity}
Matthew McKague, Tzyh~Haur Yang, and Valerio Scarani.
\newblock Robust self-testing of the singlet.
\newblock \href{http://dx.doi.org/10.1088/1751-8113/45/45/455304}{{\em J. Phys.
  A: Math. Theor.}, 45:455304, 2012},
  \href{http://www.arxiv.org/abs/1203.2976}{{\tt arXiv:1203.2976 [quant-ph]}}.

\bibitem[NV17]{NatarajanVidick17lineartest}
Anand Natarajan and Thomas Vidick.
\newblock A quantum linearity test for robustly verifying entanglement.
\newblock \href{http://dx.doi.org/10.1145/3055399.3055468}{In {\em Proc. 49th
  ACM STOC}, pages 1003--1015, 2017},
  \href{http://www.arxiv.org/abs/1610.03574}{{\tt arXiv:1610.03574 [quant-ph]}}.

\bibitem[NV18]{NatarajanVidick2018lowdegree}
Anand Natarajan and Thomas Vidick.
\newblock Low-degree testing for quantum states, and a quantum entangled games PCP for QMA.
\newblock 2018, \href{http://www.arxiv.org/abs/1801.03821}{{\tt
  arXiv:1801.03821 [quant-ph]}}.

\bibitem[OV16]{OstrevVidick16entanglement}
Dimiter Ostrev and Thomas Vidick.
\newblock Entanglement of approximate quantum strategies in {XOR} games.
\newblock 2016, \href{http://www.arxiv.org/abs/1609.01652}{{\tt arXiv:1609.01652 [quant-ph]}}.

\bibitem[Per90]{Peres90magicsquare}
Asher Peres.
\newblock Incompatible results of quantum measurements.
\newblock \href{http://dx.doi.org/10.1016/0375-9601(90)90172-K}{{\em Phys.
  Lett. A}, 151(3-4):107--108, 1990}.

\bibitem[RUV12]{ReichardtUngerVazirani12qmip}
Ben~W. Reichardt, Falk Unger, and Umesh Vazirani.
\newblock A classical leash for a \mbox{quantum} system: {C}ommand of quantum systems
  via rigidity of {CHSH} games.
\newblock 2012, \href{http://www.arxiv.org/abs/1209.0448}{{\tt arXiv:1209.0448 [quant-ph]}}.

\bibitem[RUV13]{ReichardtUngerVazirani13qmip}
Ben~W. Reichardt, Falk Unger, and Umesh Vazirani.
\newblock Classical command of quantum systems.
\newblock \href{http://dx.doi.org/10.1038/nature12035}{{\em Nature},
  496:456--460, 2013}.

\bibitem[S{\etalchar{+}}15]{Shalm15loopholeCHSH}
Lynden~K. Shalm \emph{et al.}
%, Evan Meyer-Scott, Bradley~G. Christensen, Peter Bierhorst,
%  Michael~A. Wayne, Martin~J. Stevens, Thomas Gerrits, Scott Glancy, Deny~R.
%  Hamel, Michael~S. Allman, Kevin~J. Coakley, Shellee~D. Dyer, Carson Hodge,
%  Adriana~E. Lita, Varun~B. Verma, Camilla Lambrocco, Edward Tortorici, Alan~L.
%  Migdall, Yanbao Zhang, Daniel~R. Kumor, William~H. Farr, Francesco Marsili,
%  Matthew~D. Shaw, Jeffrey~A. Stern, Carlos Abell{\'a}n, Waldimar Amaya,
%  Valerio Pruneri, Thomas Jennewein, Morgan~W. Mitchell, Paul~G. Kwiat,
%  Joshua~C. Bienfang, Richard~P. Mirin, Emanuel Knill, and Sae~Woo Nam.
\newblock Strong loophole-free test of local realism.
\newblock \href{http://dx.doi.org/10.1103/PhysRevLett.115.250402}{{\em Phys.
  Rev. Lett.}, 115:250402, 2015},
  \href{http://www.arxiv.org/abs/1511.03189}{{\tt arXiv:1511.03189 [quant-ph]}}.

\bibitem[WBMS16]{WuBancalMcKagueScarani15twoEPRtest}
Xingyao Wu, Jean-Daniel Bancal, Matthew McKague, and Valerio Scarani.
\newblock Device-independent parallel self-testing of two singlets.
\newblock \href{http://dx.doi.org/10.1103/PhysRevA.93.062121}{{\em Phys. Rev.
  A}, 93:062121, 2016}, \href{http://www.arxiv.org/abs/1512.02074}{{\tt
  arXiv:1512.02074 [quant-ph]}}.

\bibitem[YN13]{yangnavascues13robust}
Tzyh~Haur Yang and Miguel Navascu{\'e}s.
\newblock Robust self-testing of unknown quantum systems into any entangled
  two-qubit states.
\newblock \href{http://dx.doi.org/10.1103/PhysRevA.87.050102}{{\em Phys.
  Rev. A}, 87(5):050102, 2013},
  \href{http://www.arxiv.org/abs/1210.4409}{{\tt arXiv:1210.4409 [quant-ph]}}.

\end{thebibliography}

\end{document}